\newcommand{\argmin}{\operatornamewithlimits{arg\ min}}
\theoremstyle{plain}
\newtheorem{pro}{Proposition}
\newtheorem{cor}{Corollary}
\theoremstyle{definition}
\newtheorem{rem}{Remark}
\title{Allowing for weak identification when testing GARCH-X type models}
\author{Philipp Ketz\footnote{Paris School of Economics - CNRS. E-mail: \href{mailto:philipp.ketz@psemail.eu}{philipp.ketz@psemail.eu}.}}
\begin{document}

\maketitle

\begin{abstract}

In this paper, we use the results in \cite{AC1}, extended to allow for parameters to be near or at the boundary of the parameter space, to derive the asymptotic distributions of the two test statistics that are used in the two-step (testing) procedure proposed by \cite{PedersenRahbek:19}. The latter aims at testing the null hypothesis that a GARCH-X type model, with exogenous covariates (X), reduces to a standard GARCH type model, while allowing the ``GARCH parameter" to be unidentified. We then provide a characterization result for the asymptotic size of any test for testing this null hypothesis before numerically establishing a lower bound on the asymptotic size of the two-step procedure at the 5\% nominal level. This lower bound exceeds the nominal level, revealing that the two-step procedure does not control asymptotic size. In a simulation study, we show that this finding is relevant for finite samples, in that the two-step procedure can suffer from overrejection in finite samples. We also propose a new test that, by construction, controls asymptotic size and is found to be more powerful than the two-step procedure when the ``ARCH parameter'' is ``very small'' (in which case the two-step procedure underrejects).

\bigskip
\noindent \textbf{Keywords:} Boundary, weak identification, testing. 
\end{abstract}

\thispagestyle{empty}

\newpage

\clearpage
\setcounter{page}{1}

\section{Introduction}

In a GARCH-X type model, the variance of a generalized autoregressive conditional heteroskedasticity (GARCH) type model is augmented by a set of exogenous regressors (X). Naturally, the question arises if the more general GARCH-X type model can be reduced to the simpler GARCH type model. Statistically speaking, the problem reduces to testing whether the coefficients on the exogenous regressors are equal to zero. As noted in \cite{PedersenRahbek:19} (PR hereinafter), the testing problem is non-standard due to the presence of two nuisance parameters that could possibly be at the boundary of the parameter space. In addition, under the null hypothesis, one of the nuisance parameters, the ``GARCH parameter'', is not identified when the other, the ``ARCH parameter'', is at the boundary. In order to address this possible lack of identification, PR suggest a two-step (testing) procedure, where rejection in the first step is taken as ``evidence'' that the model is identified. In the second step, the authors then impose an ``additional assumption'', which implies that a specific entry of the inverse information equals zero, to obtain an asymptotic null distribution of their second-step test statistic that is nuisance parameter free. There are two potential problems with this two-step procedure. First, it may not control (asymptotic) size, i.e., its (asymptotic) size may exceed the nominal level, for reasons similar to those that invalidate ``naive'' post-model-selection inference \citep[see e.g.,][]{LP:05,LP:08}. In addition, the aforementioned ``additional assumption'' may not be satisfied, which may possibly aggravate the problem. Second, the two-step procedure may, due to its two-step nature and despite the possible lack of (asymptotic) size control, have poor power in certain parts of the parameter space, as suggested by simulations in PR.\footnote{The simulation results in Appendix D of PR show that the two-step procedure has a null rejection frequency below the nominal level for ``very small'' values of the ARCH parameter; see also Section \ref{MC}.} 

In this paper, we use the results in \cite{AC1} (AC hereinafter), extended to allow for parameters to be near or at the boundary of the parameter space,\footnote{Here, the parameter space is equal to a product space; see \cite{Cox:22} for related results in the context of more general shapes of the parameter space.} to derive the asymptotic distributions of the two test statistics used in PR under weak, semi-strong, and strong identification (using the terminology in AC). These asymptotic distribution results, in turn, allow us characterize the asymptotic size of any test for testing the null hypothesis that the coefficients on the exogenous regressors are equal to zero. We numerically establish lower bounds on the asymptotic sizes of the two-step procedure proposed by PR as well as a second testing procedure proposed by PR that assumes that the ARCH parameter is known to be in the interior of the parameter space. These bounds are given by 6.65\% and 9.48\%, respectively, for a 5\% nominal level, which implies that the two testing procedures do not control asymptotic size (at the 5\% nominal level).\footnote{These lower bounds (also) apply if the ``additional assumption'' and, in case of the second testing procedure, the assumption that the ARCH parameter is in the interior of the parameter space are satisfied; see Remark \ref{} for details.} Furthermore, we propose a new test based on the second-step test statistic of PR that uses plug-in least favorable configuration critical values and, thus by construction, controls asymptotic size. 

In a small simulation study, we find that our asymptotic theory provides good approximations to the finite-sample behaviors of the tests, or testing procedures, that we consider. In particular, we find that the testing procedures proposed by PR can suffer from overrejection in finite samples. Furthermore, we find that our new test has greater power than the two-step procedure for ``very small'' values of the ARCH parameter, a presumably empirically important region of the parameter space. This finding is line with the intuition that the two-step procedure, in some sense, ``sacrifices'' power for such parameter constellations due to its two-step nature.

The remainder of this paper is organized as follows. In Section \ref{TP}, we introduce the testing problem as well as the two testing procedures proposed by PR. In Section \ref{AT}, we present the asymptotic distribution results. Section \ref{AsySz} presents the characterization result for asymptotic size and obtains the lower bounds on the asymptotic sizes of the two testing procedures proposed by PR. It also introduces our new test. The result of our simulation study are presented in Section \ref{MC}. Additional material, including proofs, is relegated to the Appendix.

Throughout this paper, we use the following conventions. All limits are taken ``as $n \to \infty$''. $e_i$ denotes a vector of zeros (of suitable dimension) with a one in the $i^\text{th}$ position. For any matrix $A$, $A_{ij}$ denotes the entry with row index $i$ and column index $j$. Furthermore, $X_n(\pi) = o_{p \pi}(1)$ means that $\sup_{\pi \in \Pi} \| X_n(\pi) \| = o_p(1)$, where $\| \cdot \|$ denotes the Euclidean norm. Lastly, ``for all $\delta_n \to 0$'' abbreviates ``for all sequences of positive scalar constants $\{ \delta_n: n \geq 1\}$ for which $\delta_n \to 0$''. 

\section{Testing problem} \label{TP}
For ease of exposition, we consider a simple version of the GARCH-X(1,1) model with a single exogenous variable (as in PR). In particular, the model is given by 
\begin{equation} \label{y}
	y_t = h_t(\theta)^{1/2} z_t,
\end{equation}
where $\theta = (\psi',\pi)' = (\beta',\zeta,\pi)'$ and
\begin{equation} \label{h}
	h_t(\theta) = h_t(\psi,\pi) = h_t(\beta,\zeta,\pi) = \zeta(1-\pi) + \beta_1 y_{t-1}^2 + \pi h_{t-1}(\psi,\pi) + \beta_2 x^2_{t-1}
\end{equation}
with $h_0(\theta) = \zeta$.\footnote{For ease of reference, we adopt the notation in AC, where $\beta$ governs the identification strength of $\pi$ (see \eqref{verification_assumption_A} below) and $\psi$ is always identified.} Here, $\{y_t,x_t\}_{t=0}^n$ is observed and $\{z_t\}_{t=0}^n$ is unobserved. The true parameter space for $\theta$, i.e., the space of all possible true values of $\theta$, is given by $\Theta^* = \Psi^* \times \Pi^*$, where
\[
	\Psi^* = \{ \psi : 0 \leq \beta_1 \leq \overline{\beta}^*_1, 0 \leq \beta_2 \leq \overline{\beta}^*_2, \underline{\zeta}^* \leq \zeta \leq \overline{\zeta}^*  \} \text{ and } \Pi^* = \{ \pi : 0 \leq \pi \leq \overline{\pi}^* \}
\]
for some $0 < \overline{\beta}^*_1 < \infty$, $0 < \overline{\beta}^*_2 < \infty$, $0 < \underline{\zeta}^* < \overline{\zeta}^* < \infty $, and $0< \overline{\pi}^* < 1$. 

The model is estimated by quasi-maximum likelihood. In particular, the objective function is given by (-$\frac{1}{n}$ times) the Gaussian-based conditional quasi log-likelihood function, i.e., $Q_n(\theta) = \frac{1}{n} \sum_{t=1}^n l_t(\theta)$, where 
\[
	l_t(\theta) = \frac{1}{2} \log(2 \tilde{\pi} )  + \frac{1}{2} \log(h_t(\theta)) + \frac{y_t^2}{2h_t(\theta)}
\]
and where $\tilde{\pi} = 3.14...$. The quasi-maximum likelihood estimator is given by
\[
	\hat{\theta}_n= \argmin_{\theta \in \Theta} Q_n(\theta),
\]
where $\Theta = \Psi \times \Pi$ denotes the optimization parameter space with
\[
	\Psi = \{ \psi : 0 \leq \beta_1 \leq \overline{\beta}_1, 0 \leq \beta_2 \leq \overline{\beta}_2, \underline{\zeta} \leq \zeta \leq \overline{\zeta}  \} \text{ and } \Pi = \{ \pi : 0 \leq \pi \leq \overline{\pi} \}
\]
for some $\overline{\beta}^*_1  < \overline{\beta}_1< \infty$, $\overline{\beta}^*_2  < \overline{\beta}_2< \infty$, $0 < \underline{\zeta} < \underline{\zeta}^*$, $\overline{\zeta}^* < \overline{\zeta} < \infty$, and $\overline{\pi}^*< \overline{\pi} < 1$. Note that, given the definitions of $\Theta^*$ and $\Theta$, (the true values of) $\beta_1$, $\beta_2$, and $\pi$ are allowed to be at the boundary of the optimization parameter space.

While our asymptotic distribution results are useful for analyzing a wide range of testing problems, we are mainly interested in testing
\begin{equation} \label{testing_problem}
	H_0: \beta_2 = 0 \text{\ vs.\ } H_1:\beta_2 > 0.
\end{equation}
As pointed out in PR, this testing problem is non-standard in that, under $H_0$, there are two nuisance parameters that \textit{may} be at the boundary of the (optimization) parameter space, $\beta_1$ and $\pi$. Furthermore, when $\beta_1$ is at the boundary ($\beta_1 = 0$) then $\pi$ is not identified under $H_0$. To see this, note that, given $h_0(\theta) = \zeta$, we have 
\begin{equation} \label{verification_assumption_A}
	h_t(\theta) = \zeta +  \beta_1 \sum_{i=0}^{t-1} \pi^i y_{t-i-1}^2 + \beta_2 \sum_{i=0}^{t-1} \pi^i x_{t-i-1}^2
\end{equation}
and, thus, $h_t(0,\zeta,\pi) = \zeta$ $\forall \theta = (\beta,\zeta,\pi) = (0,\zeta,\pi) \in \Theta, \forall n \geq 1$. In words, under $H_0$, $\beta_1 = 0$ implies that the distribution of the data does not depend on $\pi$, i.e., $\pi_1$ and $\pi_2$ (with $\pi_1 \neq \pi_2$) are observationally equivalent. 

Let $\mathcal{T}_n$ denote a generic test statistic for testing \eqref{testing_problem} and let cv$_{n,1-\alpha}$ denote the corresponding nominal level $\alpha$ critical value, which may depend on $n$. The size of the test that rejects $H_0$ when $\mathcal{T}_n > \text{cv}_{n,1-\alpha}$ is given by Sz$_\mathcal{T} = \sup_{\gamma \in \Gamma:\beta_2 = 0} P_\gamma (\mathcal{T}_n > \text{cv}_{n,1-\alpha})$, where $\Gamma$ denotes the true parameter space for $\gamma = (\theta,\phi)$ and where $\phi$ denotes the distribution of $\{ x_t, z_t \}$. We say that a test controls size if Sz$_\mathcal{T} \leq \alpha$. We note that ``uniformity'' (over $\Gamma$) is built into the definition of Sz$_\mathcal{T}$ and that whether a test controls size crucially depends on $\Gamma$. Typically, it is infeasible to compute Sz$_\mathcal{T}$. Therefore, we rely on asymptotic approximations. In particular, we approximate the (``finite-sample'') size of a test by its asymptotic size, which is given by
\[
	\text{AsySz}_\mathcal{T} = \limsup_{n\to \infty} \sup_{\gamma \in \Gamma:\beta_2 = 0} P_\gamma (\mathcal{T}_v > \text{cv}_{n,1-\alpha}).
\]
While $\text{AsySz}_\mathcal{T}$ ``still'' depends on $\Gamma$, it generally only does so through a finite-dimensional parameter, making its evaluation ``easier''. We say that a test controls asymptotic size if AsySz$_\mathcal{T} \leq \alpha$. In large samples, AsySz$_\mathcal{T}$ provides a good approximation of Sz$_\mathcal{T}$ so that a test that controls asymptotic size can be expected to ``approximately'' control size. Therefore, in what follows we focus on whether or not a given test, or testing procedure, controls asymptotic size. 

\subsection{Testing procedures proposed by PR}
Given the non-standard nature of the testing problem in \eqref{testing_problem}, PR propose a two-step procedure to deal with the presence of nuisance parameters on the boundary of the parameter space as well as the lack of identification of $\pi$ when $\beta_1 = 0$. In the first step, PR propose to test $H_0^\dagger: \beta_1 = \beta_2 = 0$ using the corresponding (rescaled) quasi-likelihood ratio statistic
\[
	LR^\dagger_n = 2n (Q_n(\hat{\theta}^\dagger_{n,0})  - Q_n(\hat{\theta}_n) ) /\hat{c}_n^\dagger,
\]
where $\hat{\theta}^\dagger_{n,0} = \argmin_{\theta \in \Theta^\dagger_0}Q_n({\theta})$ with $\Theta^\dagger_0 = \{ \theta \in \Theta: \beta_1 = \beta_2 = 0\}$ and where\footnote{\label{estimation_c}Note that other estimators of $c_0 = c(\gamma_0) = E_{\gamma_0} (z_t^2 - 1)^2/2$ are available (see Section \ref{AT} for the ``definition'' of $\gamma_0$): For example, $\hat{c}_{\text{alt}}(\theta) =  \left( \frac{1}{n} \sum_{t=1}^n  \frac{y_t^4}{h_t(\theta)^2} - 1 \right)/2$ evaluated at $\hat{\theta}^\dagger_{n,0}$ or $\hat{\theta}_{n}$; similarly, one could evaluate $\hat{c}(\theta)$ at $\hat{\theta}_{n}$. Note that \cite{Andrews:01} also considers (rescaled) quasi-likelihood ratio statistics of the form
\[
	2n  (\min_{\pi \in \Pi} Q_n(\hat{\psi}^\dagger_{n,0}(\pi),\pi)/c(\hat{\psi}_n(\pi),\pi)  - \min_{\pi \in \Pi}Q_n(\hat{\psi}_n(\pi),\pi)/c(\hat{\psi}_n(\pi),\pi) ),
\]
where $\hat{\theta}^\dagger_{n,0} = ((\hat{\psi}^\dagger_{n,0}(\hat{\pi}^\dagger_{n,0}))',\hat{\pi}^\dagger_{n,0})'$. We note that $Q_n(\hat{\psi}^\dagger_{n,0}(\pi),\pi)$ does not depend on $\pi$.
}
\[
	\hat{c}^\dagger_n = \hat{c}(\hat{\theta}^\dagger_{n,0}) =  \left( \frac{1}{n} \sum_{t=1}^n \left(  \frac{y_t^2}{h_t(\hat{\theta}^\dagger_{n,0})} - 1 \right)^2 \right)/2 = \left( \frac{1}{n} \sum_{t=1}^n \left( \frac{y_t^2}{\hat{\zeta}^\dagger_{n,0}} - 1\right)^2 \right)/2 .
\]
Here, $\hat{\theta}^\dagger_{n,0} = (\hat{\zeta}^\dagger_{n,0},0,0,\hat{\pi}^\dagger_{n,0})'$ with $\hat{\pi}^\dagger_{n,0} = \Pi$. The asymptotic null distribution of $LR^\dagger_n$ can be derived using the results in \cite{Andrews:01} (see e.g., Theorem 2.1 in PR) and, although not available in closed form, can easily be simulated from. If $H_0^*$ is not rejected, then $H_0$ is not rejected. If $H_0^*$ is rejected, then PR conclude that $\beta_1 > 0$. Given that $\beta_1 > 0$, the nuisance parameter $\pi$ is identified and the only remaining ``problem'' is that $\pi$ may still be at the boundary ($\pi = 0$). In the second step, PR then suggest to test \eqref{testing_problem}, under the maintained assumption that $\beta_1 > 0$, using the corresponding (rescaled) quasi-likelihood ratio statistic
\[
	LR_n = 2n  (Q_n({\hat{\theta}_{n,0}})  - Q_n(\hat{\theta}_n) )/\hat{c},
\]
where $\hat{\theta}_{n,0} = \argmin_{\theta \in \Theta_0}Q_n({\theta})$ with $\Theta_0 = \{ \theta \in \Theta: \beta_2 = 0\}$ and where $\hat{c}_n = \hat{c}(\hat{\theta}_{n,0})$.\footnote{Alternatively, we could evaluate $\hat{c}(\theta)$ at $\hat{\theta}_{n}$ or use $\hat{c}_{\text{alt}}(\theta)$ (see footnote \ref{estimation_c}) evaluated at $\hat{\theta}_{n,0}$ or $\hat{\theta}_{n}$.}
In this context, PR make an ``additional assumption'' on the dependence between $\{ x_t \}$ and $\{ y_t \}$ that implies that a certain entry of the inverse of the information matrix equals zero when $\pi = 0$. This, in turn, implies that the asymptotic null distribution of $LR_n$ simplifies to $\max(0,Z)^2$, where $Z\sim N(0,1)$. Formally, the two-step procedure (TS) is defined as follows: Reject $H_0$ if 
\[
	{TS}_n = \mathbbm{1}(LR_n^\dagger > \widetilde{LR}^\dagger_{n,1-\alpha}) \times LR_n > \text{cv}_{1-\alpha},
\]
where $\text{cv}_{1-\alpha}$ and $\widetilde{LR}^\dagger_{1-\alpha}$ denote the $1-\alpha$ quantiles of $\max(0,Z)^2$ and the asymptotic distribution given in \eqref{asy_dist_LR_star} with $b = 0$ and unknown quantities replaced by consistent estimators (see e.g., Appendix \ref{CD}), respectively.\footnote{PR do not formally define their two-step procedure in that they do not define the nominal levels that ought to be used in the two steps, say $\alpha_1$ and $\alpha_2$. Here, we take $\alpha_1 = \alpha_2 = \alpha$. Of course, given the results in this paper, it is in principle possible to choose $\alpha_1$ and $\alpha_2$ in order to ensure that AsySz$_{TS}\leq \alpha$. We refrain from doing so, however, because, given the results in this paper, the motivation for using a two-step procedure is rendered obsolete; see e.g., the new test.}  While it is possible to derive the asymptotic null distribution of $LR_n$ without this ``additional assumption'', PR refrain from doing so.\footnote{If $\beta_1 > 0$, or rather $\beta_1 \geq c > 0$ for some $c \in \mathbb{R}$, then it is straightforward to test \eqref{testing_problem} using the approach in \cite{Ketz:JMP}, without any assumptions on the dependence between $\{ x_t \}$ and $\{ y_t \}$.} 

PR also mention that if it is \textit{a priori} known that $\beta_1 > 0$ then one may directly test \eqref{testing_problem} using $LR_n$ and their suggested critical value, $\text{cv}_{1-\alpha}$. We refer to this test as the ``second testing procedure (of PR)''. Formally, the second testing procedure (S) is defined as follows: If $\beta_1 > 0$, reject $H_0$ if 
\[
	S_n = LR_n > \text{cv}_{1-\alpha}.
\]

There are several potential issues with the above testing procedures. First, the intuition underlying the two-step procedure is based on the assumption that the first-step test never makes a type I error. This assumption, however, cannot hold and a type I error in the first step may very well propagate to the type I error of the overall (two-step) procedure.\footnote{The only way for the first-step test to never make a type I error is to take the nominal level of the corresponding test equal to zero. This, however, would lead the first-step test (as well as the two-step procedure) to have zero power.}  Relatedly, $\beta_1$ may be close to zero relative to the sample size; see Section \ref{AT} for details. In that case, the probability of rejecting $H_0^\dagger$ exceeds the first-step nominal level and $\pi$ is only weakly identified such that the asymptotic distribution result for $LR_n$ that takes $\pi$ to be (strongly) identified may only provide a very poor approximation to its actual finite-sample distribution. The latter may also be an issue for the second testing procedure. In both cases, one may be worried that the testing procedure does not control asymptotic size. As mentioned above, asymptotic size depends on $\Gamma$. For sake of brevity, the definition of $\Gamma$ is given in Appendix \ref{ACverification}.

\section{Asymptotic distribution results} \label{AT}

As shown in the recent literature \citep[see e.g.,][AC]{AG1}, asymptotic size is intrinsically linked to the asymptotic distribution of the test statistic under (drifting) sequences of true parameters. Let $\gamma_n = (\theta_n,\phi_n) = (\beta_n,\zeta_n,\pi_n,\phi_n) \in \Gamma$ denote the true parameter for $n \geq 1$. In the context at hand, the following (sets of) sequences are key:
\begin{align*}
	\Gamma(\gamma_0) =& \{ \{ \gamma_n \in \Gamma : n \geq 1 \} : \gamma_n \to \gamma_0 = (\theta_0,\phi_0) = (\beta_0,\zeta_0,\pi_0,\phi_0) \in \Gamma \},\\
	\Gamma(\gamma_0,0,b)  = &\{ \{ \gamma_n \} \in \Gamma(\gamma_0) : \beta_0 = 0 \text{ and } \sqrt{n} \beta_n \to b \in \mathbb{R}^2_{+,\infty} \}, \text{ and } \nonumber \\ \nonumber
	\Gamma(\gamma_0,\infty,b,\omega_0,p)  =& \{ \{ \gamma_n \} \in \Gamma(\gamma_0) : \sqrt{n} \beta_n \to b \in \mathbb{R}^2_{+,\infty} \text{ with } \| b \| = \infty, \\
	\nonumber & \ \beta_n/\| \beta_n \| \to \omega_0 \in \mathbb{R}_+^2, \text{ and } \sqrt{n} \| \beta_n \| \pi_n \to p \in [0,\infty]\}.
\end{align*}
In what follows, we use the terminology ``under $\{\gamma_n\} \in \Gamma(\gamma_0)$'' to mean ``when the true parameters are $\{ \gamma_n \} \in \Gamma(\gamma_0)$ for any $\gamma_0 \in \Gamma$, ``under $\{\gamma_n\} \in \Gamma(\gamma_0,0,b)$'' to mean ``when the true parameters are $\{ \gamma_n \} \in \Gamma(\gamma_0,0,b)$ for any $\gamma_0 \in \Gamma$ with $\beta_0 = 0$ and any $b \in \mathbb{R}^2_{+,\infty}$'', and ``under $\{\gamma_n\} \in \Gamma(\gamma_0,\infty,b,\omega_0,p)$'' to mean ``when the true parameters are $\{ \gamma_n \} \in \Gamma(\gamma_0,\infty,b,\omega_0,p)$ for any $\gamma_0 \in \Gamma$, any $b \in \mathbb{R}^2_{+,\infty} \text{ with } \| b \| = \infty$, any $\omega_0 \in \mathbb{R}_+^2$ with $\| \omega_0\| = 1$ if $b_1=b_2=\infty$ or $\omega_0 = e_j  \text{ if } b_j = \infty \text{ and } b_{j'} < \infty \text{ for } j,j' \in \{ 1,2\}$ and $j'\neq j$, and any $p \in [0,\infty]$ if $\pi_0 = 0$ or $p = \infty$ if $\pi_0 >0$''. We note that under sequences of true parameters for which $\sqrt{n} \beta_n \to b$ with $\| b \| < \infty$, identification of $\pi$ is weak, while under sequences of true parameters for which $\beta_n \to 0$ but $\sqrt{n} \beta_n \to \infty$, identification of $\pi$ is semi-strong.

All claims in this section, including the following, are verified in Appendix \ref{ACverification}. Under $\{\gamma_n\} \in \Gamma(\gamma_0)$, we have $\sup_{\pi \in \Pi} \| \hat{\psi}_n(\pi) - \psi_n \| \to_p 0$ and $\| \hat{\psi}_n - \psi_n \| \to_p 0$ when $\beta_0 = 0$, while $\hat{\theta}_n - \theta_n \to_p 0$ when $\beta_0 \neq 0$.

\subsection{Results for $\beta_n \to 0$ including asymptotic distribution for $\| b \| < \infty$} \label{Results_close_to_zero}
First, we derive results for $\beta$ close to zero, i.e., $\beta_n \to 0$. To that end, let $\hat{\theta}_n = (\hat{\psi}_n(\hat{\pi}_n), \hat{\pi}_n)$, where $\hat{\psi}_n(\pi) = (\hat{\beta}_n(\pi),\hat{\zeta}_n(\pi)) \in \Psi$ is the concentrated extremum estimator of $\psi$ for given $\pi \in \Pi$, i.e.,
\[
	\hat{\psi}_n(\pi) = \argmin_{\psi \in \Psi} Q_n(\psi,\pi),
\]
and where $\hat{\pi}_n$ is the minimizer of the concentrated objective function $Q_n^c(\pi) = Q_n(\hat{\psi}_n(\pi),\pi)$, i.e.,
\[
	\hat{\pi}_n= \argmin_{\pi \in \Pi} Q_n^c(\pi).
\]
With a slight abuse of notation, we let $\hat{\pi}_n = 1$ whenever $\hat{\pi}_n = \Pi$, which occurs when $\hat{\beta}_n(\pi) = 0$ for all $\pi \in \Pi$. Under $\{\gamma_n\} \in \Gamma(\gamma_0,0,b)$, $Q_n(\psi,\pi)$ admits the following quadratic expansion in $\psi$ around $\psi_{0,n} = (0,\zeta_n)$ for given $\pi$:
\begin{align}
	Q_n(\psi,\pi) &= Q_n(\psi_{0,n},\pi) + D_\psi Q_n(\psi_{0,n},\pi)'(\psi-\psi_{0,n}) \nonumber \\ 
	&+ \frac{1}{2} (\psi-\psi_{0,n})' H(\pi;\gamma_0)(\psi-\psi_{0,n}) + R_n(\psi,\pi), \label{quadratic_expansion}
\end{align}
where the remainder $R_n(\psi,\pi)$ satisfies
\[
	\sup_{\psi \in \Psi: \| \psi - \psi_{0,n} \| \leq \delta_n} | a_n^2(\gamma_n) R_n(\psi,\pi) | = o_{p\pi}(1)
\]
for all $\delta_n \to 0$ and where
\[
	a_n(\gamma_n) = \left\{ \begin{array}{ll} n^{1/2} & \text{if } \{ \gamma_n \} \in \Gamma(\gamma_0,0,b) \text{ and } \|b\| < \infty \\ \| \beta_n \|^{-1} & \text{if } \{ \gamma_n \} \in \Gamma(\gamma_0,0,b) \text{ and } \|b\| = \infty \end{array} \right. .
\]
Here, $D_\psi Q_n(\theta)$ denotes the left/right partial derivatives of $Q^\infty_n(\theta) = \frac{1}{n} \sum_{t=1}^n l^\infty_t(\theta)$ with respect to $\psi$, where
\[
	l^\infty_t(\theta) = \frac{1}{2} \log(2 \tilde{\pi} )  + \frac{1}{2} \log(h_t^\infty(\theta)) + \frac{y_t^2}{2h^\infty_t(\theta)}
\]
and where
\[
	h_t^\infty(\theta) = \zeta +  \beta_1 \sum_{i=0}^{\infty} \pi^i y_{t-i-1}^2 + \beta_2 \sum_{i=0}^{\infty} \pi^i x_{t-i-1}^2.
\]
In particular,
\[
	 D_\psi Q_n(\theta) = \frac{1}{n} \sum_{t=1}^n l^\infty_{\psi,t}(\theta),
\]
where
\[
	l^\infty_{\psi,t}(\theta)  = \frac{\partial}{\partial \psi} l^\infty_t(\theta) = \frac{1}{2h^\infty_t(\theta)} \left( 1-  \frac{y_t^2}{h^\infty_t(\theta)} \right)  \frac{\partial h^\infty_t(\theta)}{\partial \psi}
\]
and where
\[
	\frac{\partial h^\infty_t(\theta)}{\partial \psi} = \left( \sum_{i=0}^{\infty} \pi^i y_{t-i-1}^2, \sum_{i=0}^{\infty} \pi^i x_{t-i-1}^2, 1 \right)'.
\]
$H(\pi;\gamma_0)$ is defined below. Next, define the empirical process $\{G_n(\pi): \pi \in \Pi \}$ by 
\[
	G_n(\pi) = \sqrt{n} \frac{1}{n} \sum_{t=1}^n ( l^\infty_{\psi,t}(\psi_{0,n},\pi) - E_{\gamma_n} l^\infty_{\psi,t}(\psi_{0,n},\pi)).
\]
Under $\{\gamma_n\} \in \Gamma(\gamma_0,0,b)$, $G_n(\cdot) \Rightarrow G(\cdot;\gamma_0)$, where $\Rightarrow$ denotes weak convergence and where $G(\cdot;\gamma_0)$ is a mean zero Gaussian process with bounded continuous sample paths and covariance Kernel given by
{\footnotesize
\begin{align}
	 &\Omega(\pi_1,\pi_2; \gamma_0) = \nonumber \\ &\frac{c_0}{2} \left[ \begin{array}{ccc}   \frac{2c_0}{1-\pi_1\pi_2} + \frac{1}{(1-\pi_1)(1-\pi_2)}  & \frac{1}{\zeta_0}E_{\gamma_0} \sum_{i=0}^\infty \pi_1^i z^2_{t-i-1}  \sum_{j=0}^\infty \pi_2^j x^2_{t-j-1} &  \frac{1}{\zeta_0} \frac{1}{1-\pi_1} \\  \frac{1}{\zeta_0}E_{\gamma_0} \sum_{i=0}^\infty \pi_1^i x^2_{t-i-1}  \sum_{j=0}^\infty \pi_2^j z^2_{t-j-1} &  \frac{1}{\zeta_0^2} E_{\gamma_0} \sum_{i=0}^\infty \pi_1^i x^2_{t-i-1} \sum_{j=0}^\infty \pi_2^j x^2_{t-j-1} &  \frac{1}{\zeta_0^2}E_{\gamma_0}   \sum_{j=0}^\infty \pi_1^j x^2_{t-j-1} \\  \frac{1}{\zeta_0} \frac{1}{1-\pi_2} & \frac{1}{\zeta_0^2}E_{\gamma_0}   \sum_{j=0}^\infty \pi_2^j x^2_{t-j-1} & \frac{1}{\zeta_0^2} \end{array} \right] \label{omega}
\end{align}}for $\pi_1, \pi_2 \in \Pi$, where $c_0 = c(\gamma_0) = \frac{E_{\gamma_0}(z_t^2-1)^2}{2} =  \frac{E_{\gamma_0}z_t^4-1}{2} $.\footnote{The last equality follows by the definition of $\Gamma$.} Furthermore, we have that 
\[
	H(\pi;\gamma_0) = \Omega(\pi,\pi; \gamma_0)/c_0
\]
for $\pi \in \Pi$. Let
\[
	Z_n(\pi;\gamma_0) = -a_n(\gamma_n) H^{-1}(\pi;\gamma_0)D_\psi Q_n(\psi_{0,n},\pi).
\]
Then, under $\{\gamma_n\} \in \Gamma(\gamma_0,0,b)$, we have
\begin{equation} \label{dist_Z}
	Z_n(\pi;\gamma_0) \overset{d}{\to} \begin{cases} Z(\pi;\gamma_0,b) & \text{if } \|b\| < \infty \\
	- H^{-1}(\pi;\gamma_0)  K(\pi;\gamma_0) \omega_0 &  \text{if } \|b\| = \infty  \text{ and } \beta_n/\| \beta_n \| \to \omega_0
\end{cases},
\end{equation}
where 
\[
	Z(\pi;\gamma_0,b) = - H^{-1}(\pi;\gamma_0) \{ G(\pi;\gamma_0) + K(\pi;\gamma_0)b \}
\]
and where
\[
	 K(\pi;\gamma_0) = - \frac{1}{2} \left[ \begin{array}{cc}   \frac{2c}{1-\pi\pi_0} + \frac{1}{(1-\pi)(1-\pi_0)}  &  \frac{1}{\zeta_0}E_{\gamma_0} \sum_{i=0}^\infty \pi^i z^2_{t-i-1}  \sum_{j=0}^\infty \pi_0^j x^2_{t-j-1} \\  \frac{1}{\zeta_0}E_{\gamma_0} \sum_{i=0}^\infty \pi^i x^2_{t-i-1}  \sum_{j=0}^\infty \pi_0^j z^2_{t-j-1} & \frac{1}{\zeta_0^2} E_{\gamma_0} \sum_{i=0}^\infty \pi^i x^2_{t-i-1} \sum_{j=0}^\infty \pi^j_0 x^2_{t-j-1} \\  \frac{1}{\zeta_0} \frac{1}{1-\pi_0} &\frac{1}{\zeta_0^2}E_{\gamma_0}   \sum_{j=0}^\infty \pi_0^j x^2_{t-j-1} \end{array} \right].
\]
Next, let
\[
	q_n(\lambda,\pi;\gamma_0) = (\lambda - Z_n(\pi;\gamma_0))'H(\pi;\gamma_0)(\lambda - Z_n(\pi;\gamma_0))
\]
such that the quadratic expansion in \eqref{quadratic_expansion} can be written as
\begin{align*}
	Q_n(\psi,\pi) &= Q_n(\psi_{0,n},\pi) - \frac{1}{2a^2_n(\gamma_n)}Z_n(\pi;\gamma_0)'H(\pi;\gamma_0)Z_n(\pi;\gamma_0)\\
	&+  \frac{1}{2a^2_n(\gamma_n)} q_n(a_n(\gamma_n)(\psi-\psi_{0,n}),\pi;\gamma_0) + R^*_n(\psi,\pi).
\end{align*}
Note that the first two terms do not depend on $\psi$. As a result, following \cite{Andrews:99,Andrews:01}, it can then be shown that the (scaled and demeaned) minimizer of $Q_n(\psi,\pi)$ with respect to $\psi$ for given $\pi$ asymptotically behaves like the minimizer of the ``asymptotic version'' of $q_n(\cdot)$ over an appropriately defined parameter space for $\lambda$. 

\subsubsection{Asymptotic distribution for $\| b \| < \infty$} \label{ad_b_less_infty}

In particular, under $\{\gamma_n\} \in \Gamma(\gamma_0,0,b)$ with $\| b \| < \infty$, we have 
\begin{equation} \label{lambda_hat}
	\sqrt{n} (\hat{\psi}_n(\pi) -\psi_{0,n}) \overset{d}{\to} \hat{\lambda}(\pi;\gamma_0,b),
\end{equation}
where 
\[
	\hat{\lambda}(\pi;\gamma_0,b) = \argmin_{\lambda \in \Lambda} q(\lambda,\pi;\gamma_0,b).
\]
Here,
\[
 	q(\lambda,\pi;\gamma_0,b) =  (\lambda - Z(\pi;\gamma_0,b))'H(\pi;\gamma_0)(\lambda - Z(\pi;\gamma_0,b))
\]
and
\[
	 \Lambda = [0,\infty]^2 \times [-\infty,\infty].
\]
Next, let 
\[
	\hat{\pi}(\gamma_0,b) = \argmin_{\pi \in \Pi} q(\hat{\lambda}(\pi;\gamma_0,b),\pi;\gamma_0,b)
\]
and, with a slight abuse of notation, let $\hat{\pi}(\gamma_0,b) = 1$ whenever $\hat{\pi}(\gamma_0,b) = \Pi$. Then, we have the following first main result of this paper. Namely, under $\{\gamma_n\} \in \Gamma(\gamma_0,0,b)$ with $\| b \| < \infty$, we have
\begin{equation} \label{asy_dist_estimator}
	\left( \begin{array}{c} \sqrt{n} (\hat{\psi}_n -\psi_{0,n}) \\ \hat{\pi}_n \end{array} \right) \overset{d}{\to} \left( \begin{array}{c} \hat{\lambda}(\hat{\pi}(\gamma_0,b);\gamma_0,b) \\ \hat{\pi}(\gamma_0,b) \end{array} \right)
\end{equation}
and\footnote{Note that $\min_{\pi \in \Pi} - \hat{\lambda}(\pi;\gamma_0,b)'H(\pi;\gamma_0)\hat{\lambda}(\pi;\gamma_0,b) = - \hat{\lambda}(\hat{\pi}(\gamma_0,b);\gamma_0,b)'H(\pi;\gamma_0)\hat{\lambda}(\hat{\pi}(\gamma_0,b);\gamma_0,b)$.}
\begin{equation} \label{asy_dist_objective}
	2n(Q_n(\hat{\theta}_n) - Q_{0,n}) \overset{d}{\to} \min_{\pi \in \Pi} - \hat{\lambda}(\pi;\gamma_0,b)'H(\pi;\gamma_0)\hat{\lambda}(\pi;\gamma_0,b).
\end{equation}
The corresponding asymptotic distribution results for $\hat{\theta}^\dagger_{n,0}$ and $\hat{\theta}_{n,0}$ are obtained similarly, by replacing $\Lambda$ with $\{0\}^2 \times [-\infty,\infty]$ and $ [0,\infty] \times \{0\} \times [-\infty,\infty]$, respectively.

\begin{figure}[h!] 
  \begin{center}
    \includegraphics[width=54mm]{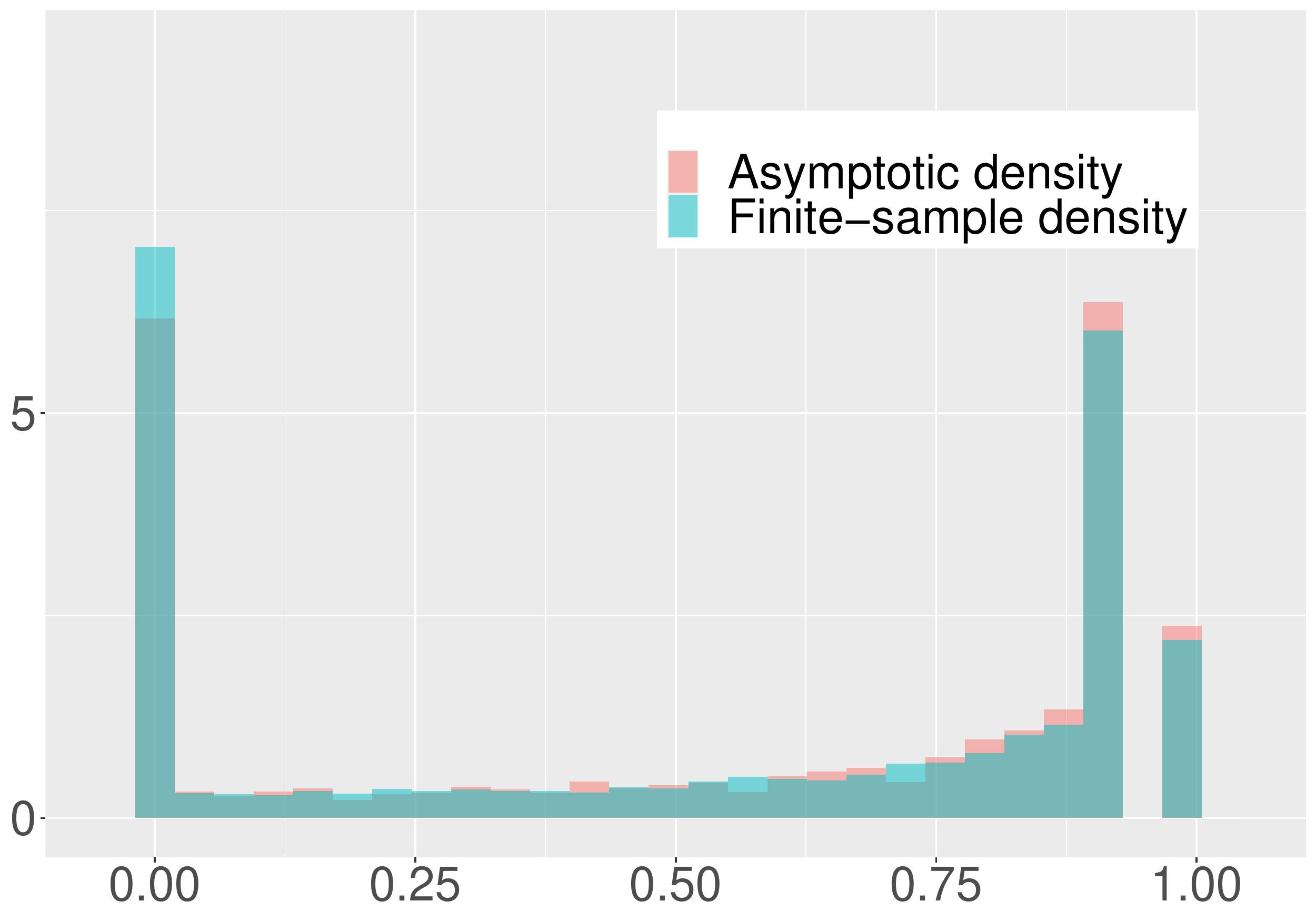}
    \includegraphics[width=54mm]{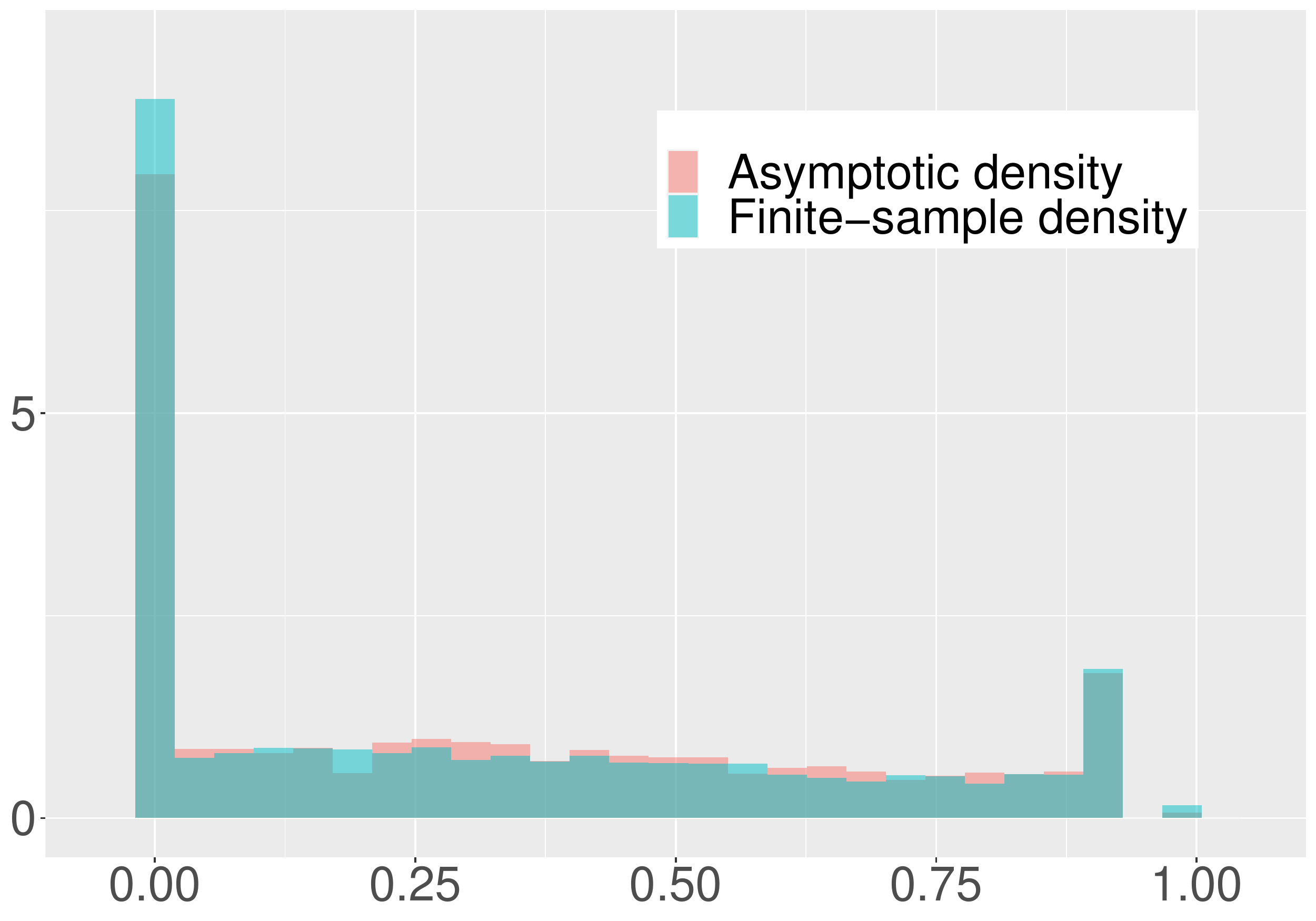}
    \includegraphics[width=54mm]{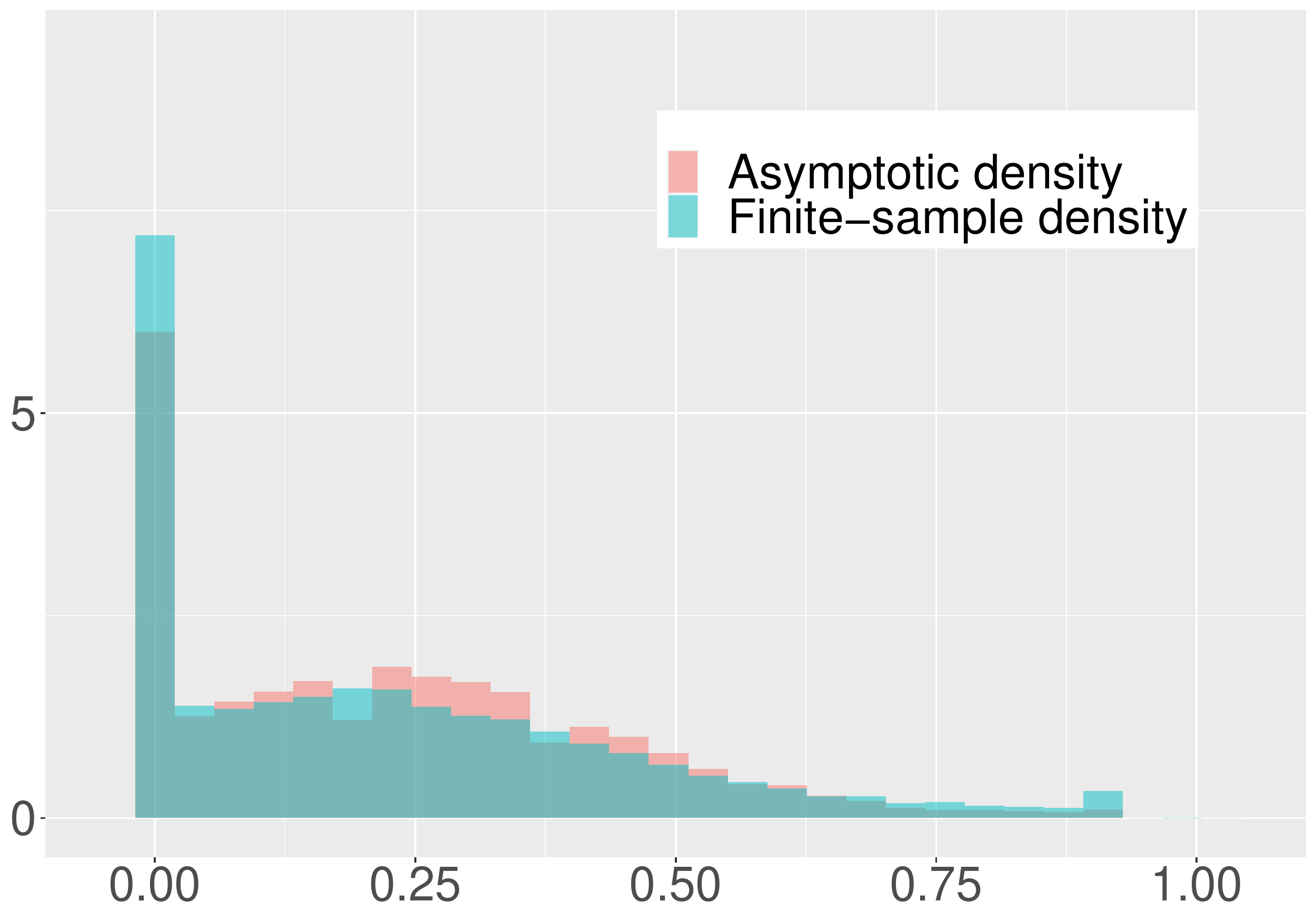}
     \caption{Asymptotic and finite-sample ($n = 500$) densities of $\hat{\pi}_{n}$ with $\sqrt{n} \beta_1 = b_1 = 0, 2,$ and 4 from left to right and $\beta_2 = b_2 = 0$. Here, $\pi = 0.2$, $\varphi = 0.5$, and $\kappa = 0$.}
        \label{plot_pi_n_500}
    \end{center}
\end{figure}

Figure \ref{plot_pi_n_500} shows the asymptotic and finite-sample ($n=500$) densities of $\hat{\pi}_n$ for several values of $b_1$ and $b_2 = 0$; see Section \ref{AsySz} for details on the data generating process ({\textit{dgp}}) and Appendix \ref{CD} for details on how the asymptotic distribution is simulated.\footnote{Figure \ref{plot_beta_zeta_n_500} in Appendix \ref{additional_graphs} plots the corresponding distributions for the elements of $\hat{\psi}_n$.} We observe that the asymptotic distribution provides a good approximation to the finite-sample distribution. Furthermore, we observe that identification strength strongly impacts the distribution of $\hat{\pi}_n$, with large deviations from normality. When $b = 0$, the distribution of $\hat{\pi}_n$ is (almost) completely flat with point masses at the boundaries of the optimization parameter space as well as at 1; recall that $\hat{\pi}_n = 1$ whenever $\hat{\pi}_n = \Pi$. And, as $b_1$ increases, the distribution of $\hat{\pi}_n$ {starts} to resemble the normal distribution (centered at the true value, $\pi = 0.2$, and truncated to the support $\Pi$). This observation is in line with the asymptotic distribution results for $\| b \| = \infty$; see Section \ref{ad_b_infty}. 

The asymptotic distribution result in \eqref{asy_dist_objective} allows us to derive the asymptotic distributions of $LR_n^\dagger$ and $LR_n$ under $\{\gamma_n\} \in \Gamma(\gamma_0,0,b)$ with $\| b \| < \infty$. Let $S_\beta = [I_2 \ 0_2]$ denote the selection matrix that (among $\psi$) selects the entries pertaining to $\beta$ and let $\hat{\lambda}_\beta(\pi;\gamma_0,b) = S_\beta \hat{\lambda}(\pi;\gamma_0,b)$. Then, the asymptotic distribution of $LR^\dagger_n$ under $\{\gamma_n\} \in \Gamma(\gamma_0,0,b)$ with $\| b \| < \infty$ is given by
\begin{equation} \label{asy_dist_LR_star}
	LR^\dagger(\gamma_0,b) =  \max_{\pi \in \Pi} \hat{\lambda}_\beta(\pi;\gamma_0,b)'(c_0S_\beta H^{-1}(\pi;\gamma_0) S_\beta')^{-1}\hat{\lambda}_\beta(\pi;\gamma_0,b) .
\end{equation}
We note that for $b = 0$ we recover the asymptotic distribution of $LR^\dagger_n$ under $H_0^\dagger$, see also Theorem 2.1 in PR. The asymptotic distribution of $LR_n$ under $\{\gamma_n\} \in \Gamma(\gamma_0,0,b)$ with $\| b \| < \infty$ is given by\footnote{If $\{z_t\}$ and $\{x_t\}$ are independent, in which case PR's ``additional assumption'' is also satisfied, then $S_\beta H^{-1}(\pi;\gamma_0) S_\beta'$ is diagonal and the asymptotic distributions of $LR_n^\dagger$ and $LR_n$ simplify. Letting $S_{\beta_1} = [1 \ 0 \ 0]$ and $S_{\beta_2} = [0 \ 1 \ 0]$ as well as $\hat{\lambda}_{\beta_1}(\pi;\gamma_0,b) = S_{\beta_1} \hat{\lambda}(\pi;\gamma_0,b)$ and $\hat{\lambda}_{\beta_2}(\pi;\gamma_0,b) = S_{\beta_2} \hat{\lambda}(\pi;\gamma_0,b)$, we have, for $i \in \{1,2\}$,
\[
	\hat{\lambda}_{\beta_i}(\pi;\gamma_0,b) \sim \max(Z_{\beta_i}(\pi;\gamma_0,b),0),
\]
with $Z_{\beta_i}(\pi;\gamma_0,b) = S_{\beta_i} Z(\pi;\gamma_0,b)$. Furthermore, we have $S_{\beta_1} \hat{\lambda}^r(\pi;\gamma_0,b) = S_{\beta_1} \hat{\lambda}(\pi;\gamma_0,b)$. Letting $\Delta(\pi) = (c_0S_\beta H^{-1}(\pi;\gamma_0) S_\beta')^{-1}$, the asymptotic distribution of $LR_n$, for example, is then given by
\[
	\max_{\pi \in \Pi} \left[ \hat{\lambda}_{\beta_1}(\pi;\gamma_0,b)^2 \Delta_{11}(\pi) + \hat{\lambda}_{\beta_2}(\pi;\gamma_0,b)^2 \Delta_{22}(\pi) \right] - \max_{\pi \in \Pi}  \hat{\lambda}_{\beta_1}(\pi;\gamma_0,b)^2 \Delta_{11}(\pi).
\]}
\begin{align}
	LR(\gamma_0,b) =& \max_{\pi \in \Pi} \hat{\lambda}_\beta(\pi;\gamma_0,b)'(c_0S_\beta H^{-1}(\pi;\gamma_0) S_\beta')^{-1}\hat{\lambda}_\beta(\pi;\gamma_0,b) \nonumber \\ -& \max_{\pi \in \Pi} \hat{\lambda}^r_\beta(\pi;\gamma_0,b)'(c_0S_\beta H^{-1}(\pi;\gamma_0) S_\beta')^{-1}\hat{\lambda}^r_\beta(\pi;\gamma_0,b), \label{asy_dist_LR}
\end{align}
where $\hat{\lambda}^r_\beta(\pi;\gamma_0,b) = S_{\beta} \hat{\lambda}^r(\pi;\gamma_0,b) $,
\[
\hat{\lambda}^r(\pi;\gamma_0,b) = \argmin_{\lambda \in \Lambda^r} q(\lambda,\pi;\gamma_0,b),
\]
and 
\[
	 \Lambda^r = [0,\infty] \times \{ 0 \} \times [-\infty,\infty].
\]
The results in \eqref{asy_dist_LR_star} and \eqref{asy_dist_LR} provide us with the asymptotic distributions of $LR^\dagger_n$ and $LR_n$, respectively, for true values of $\beta_1$ that are close to the boundary relative to the sample size, $b_1 < \infty$, which are permitted under $H_0$. Together, these results allow us to determine the asymptotic null rejection frequencies of the testing procedures proposed by PR under empirically relevant values of $\beta_1$. These asymptotic null rejection frequencies play a crucial role in determining whether the testing procedures control asymptotic size; see Section \ref{AsySz}. 

\begin{figure}[h!] 
  \begin{center}
        \includegraphics[width=54mm]{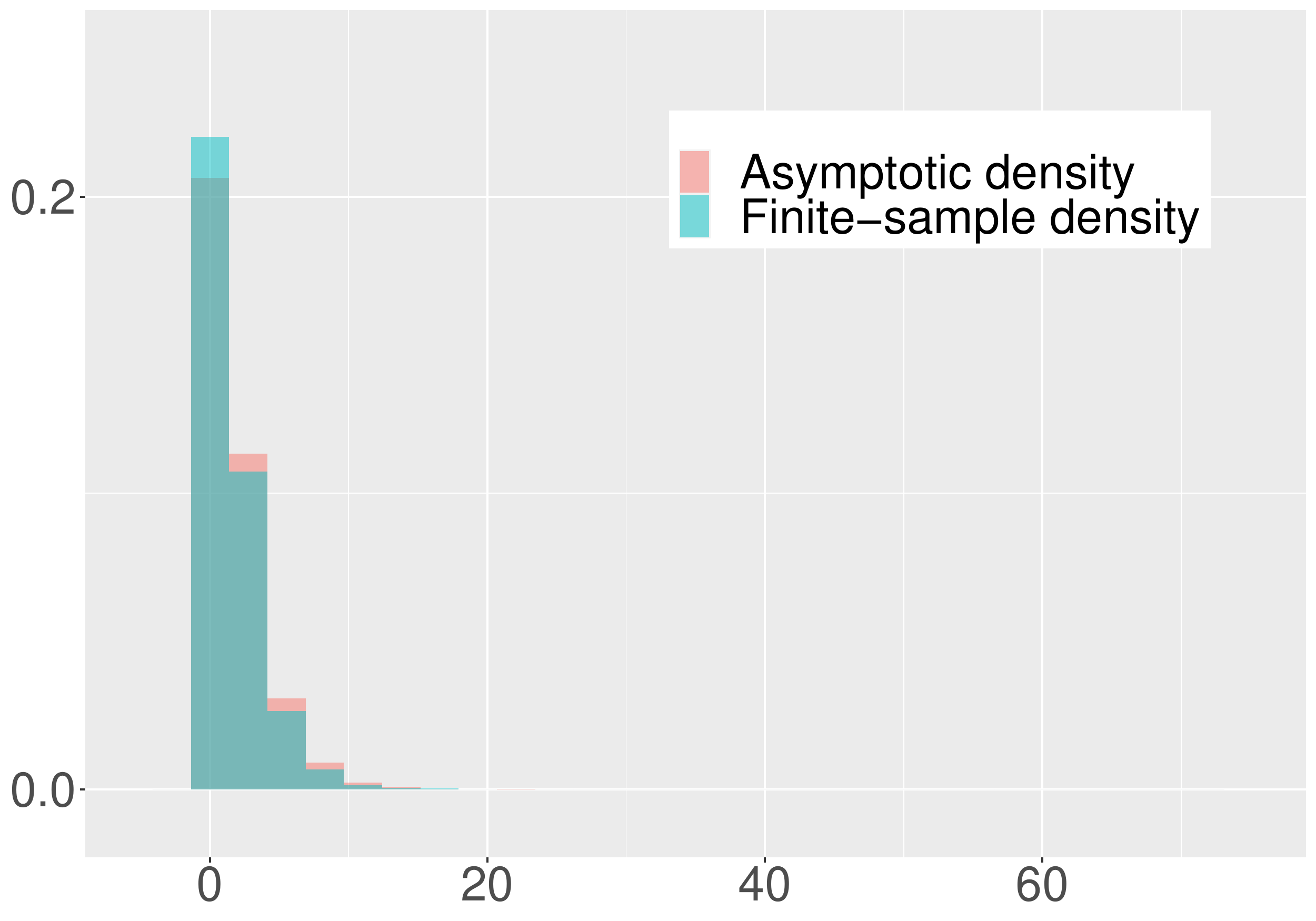}
    \includegraphics[width=54mm]{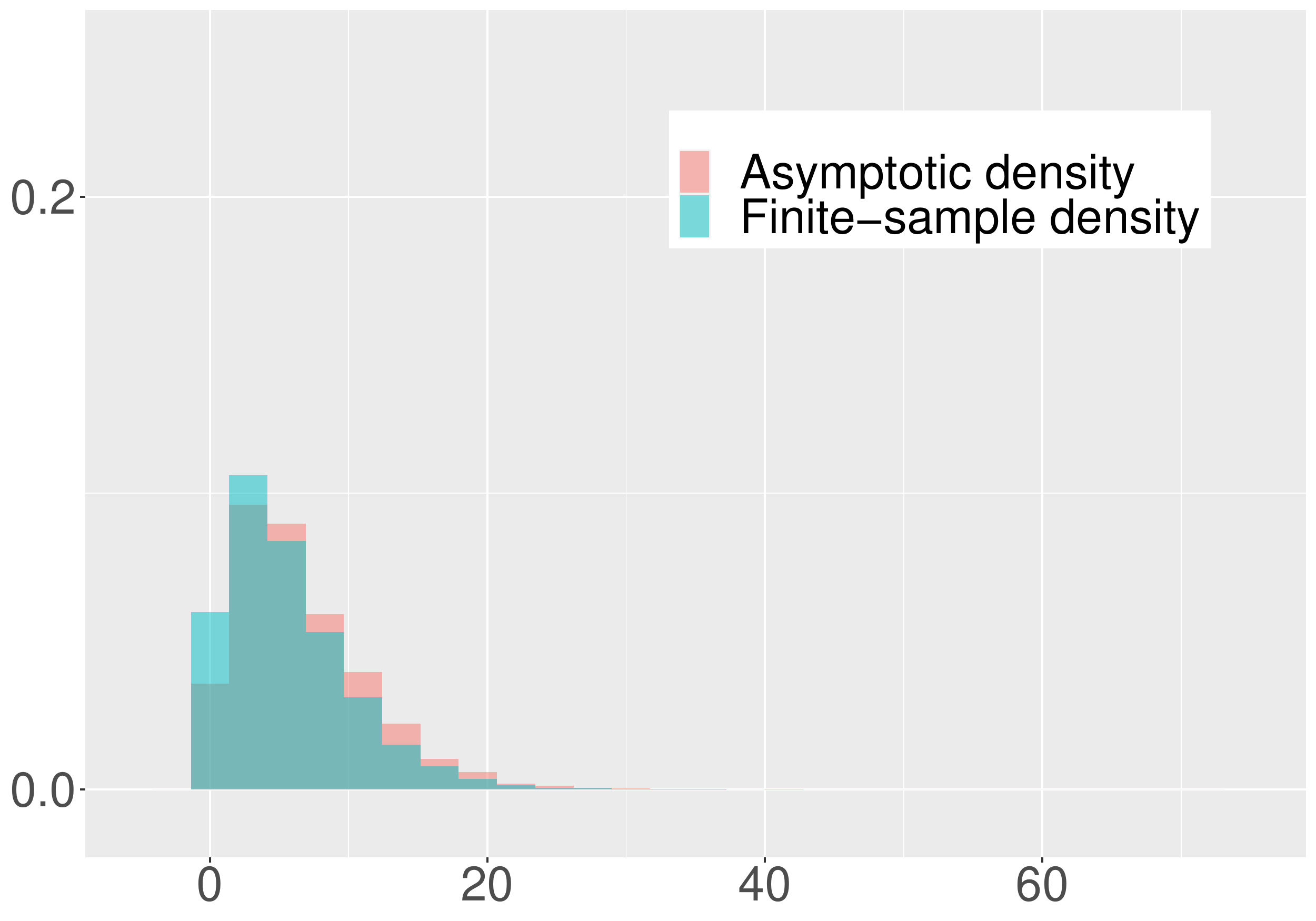}
    \includegraphics[width=54mm]{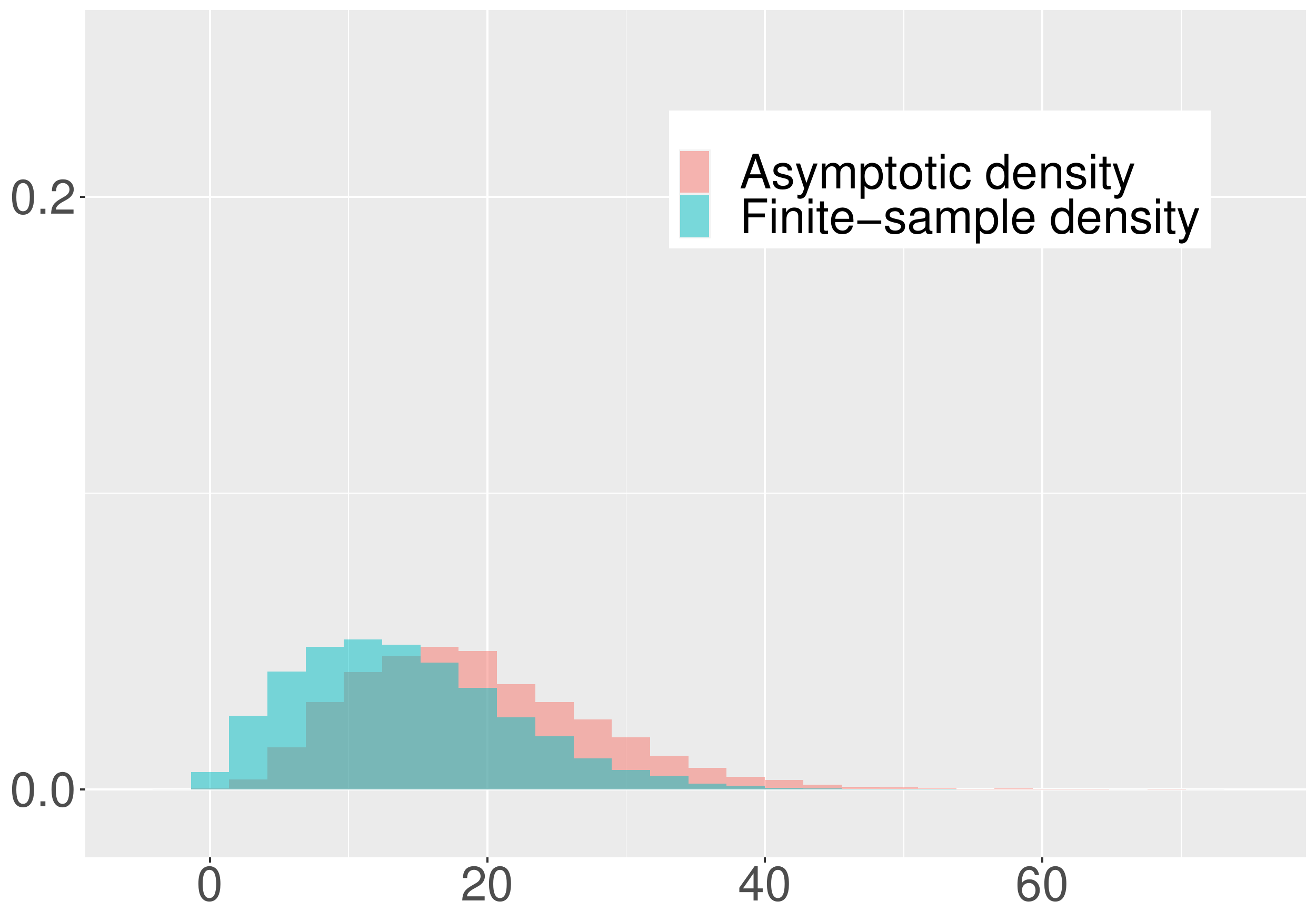}
        \includegraphics[width=54mm]{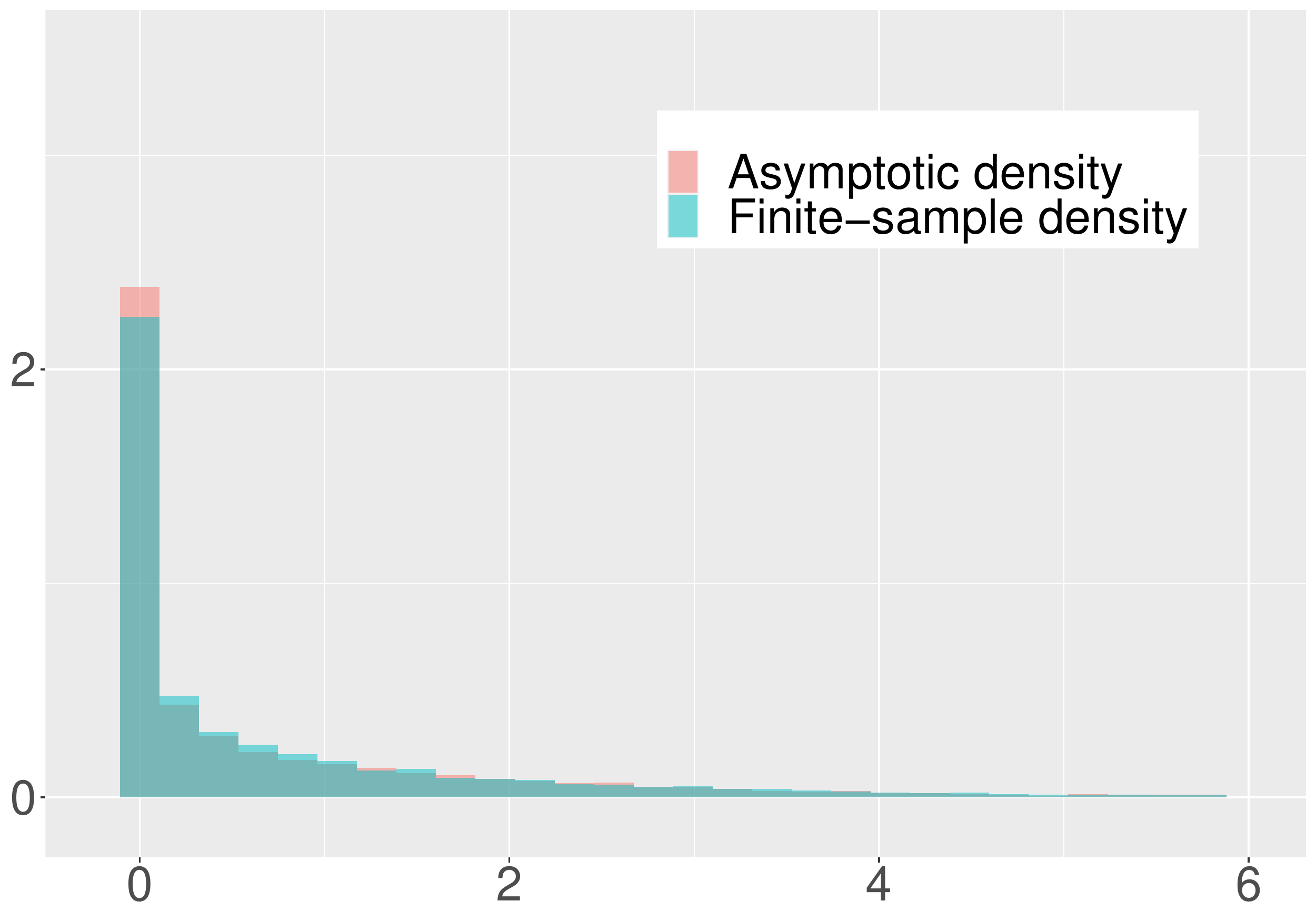}
    \includegraphics[width=54mm]{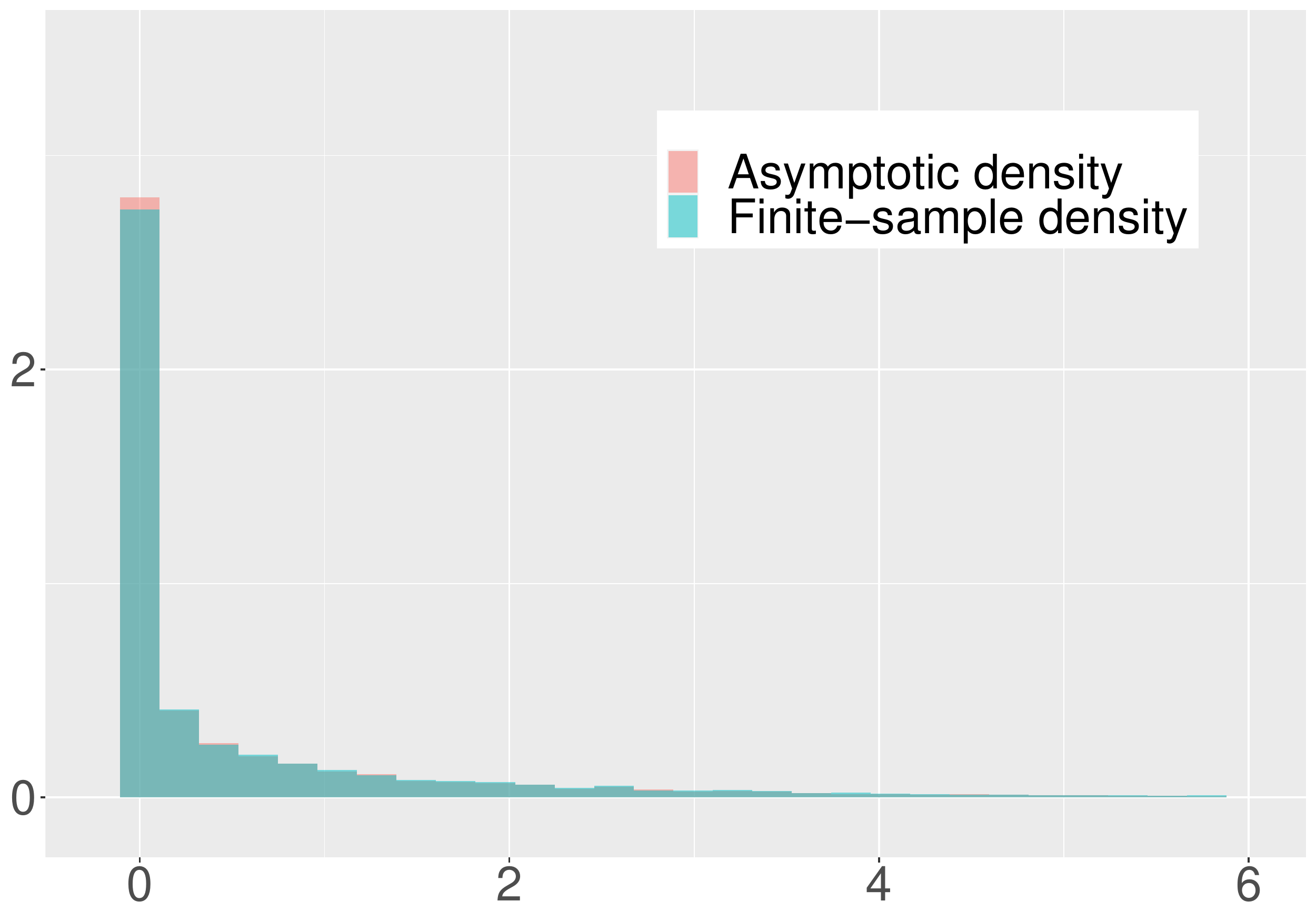}
    \includegraphics[width=54mm]{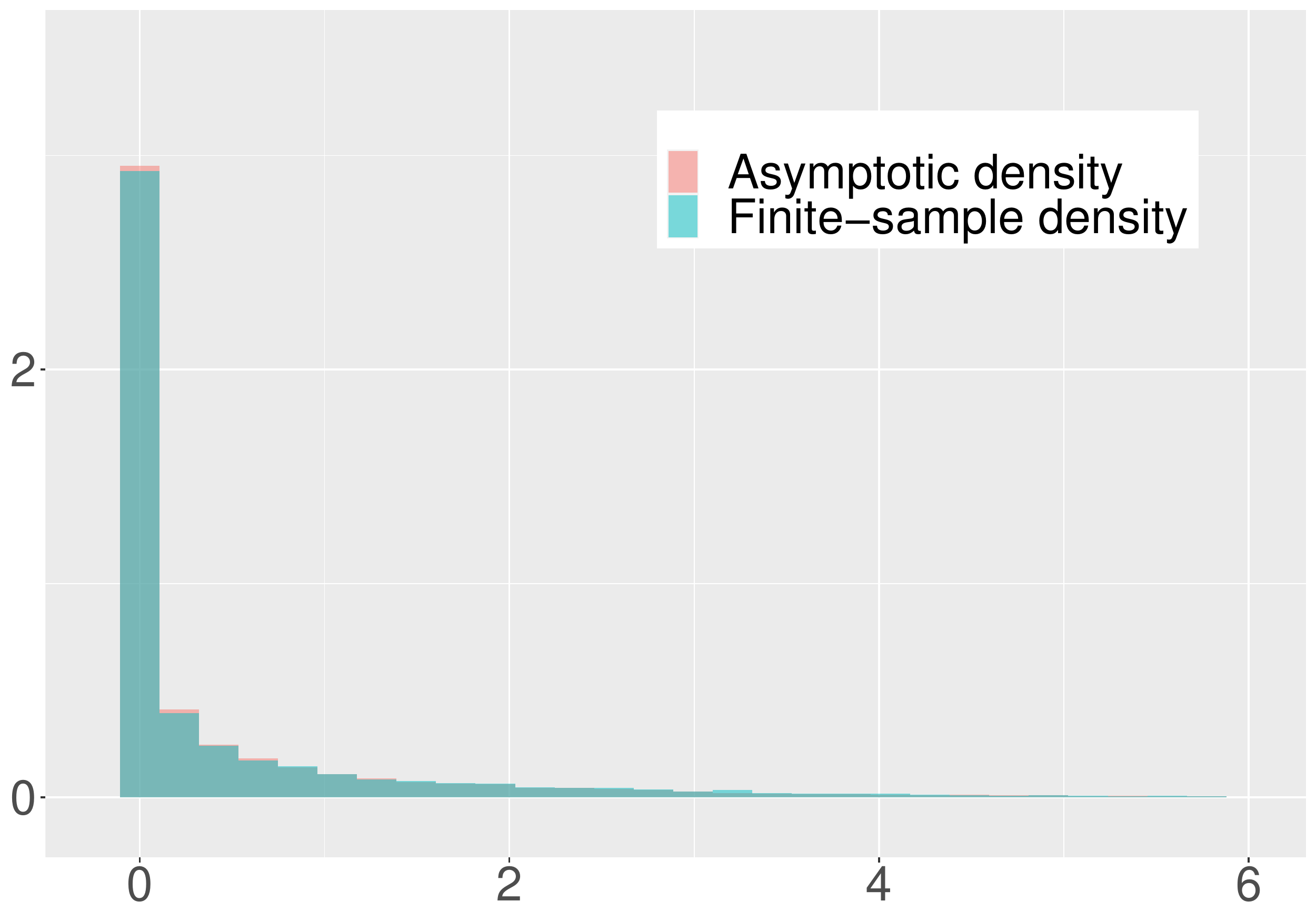}
     \caption{Asymptotic and finite-sample ($n = 500$) densities of $LR_n^\dagger$ (top row) and $LR_n$ (bottom row) with $\sqrt{n} \beta_1 = b_1 = 0,2,$ and 4 from left to right and $\beta_2 = b_2 = 0$. Here, $\pi = 0.2$, $\varphi = 0.5$, and $\kappa = 0$.}
       \label{plot_LRs_n_500}
    \end{center}
\end{figure}

Figure \ref{plot_LRs_n_500} shows the asymptotic and the finite-sample ($n = 500$) densities of $LR_n^\dagger$  and $LR_n$ for the same \textit{dgp} that underlies Figure \ref{plot_pi_n_500}. While the asymptotic distribution of $LR_n$ provides a very good approximation to its finite-sample distribution, the asymptotic distribution of $LR_n^\dagger$ seems to be first-order stochastically dominated by its finite-sample distribution. However, looking at Figure \ref{plot_LRs_n_10000}, which reproduces the top row of Figure \ref{plot_LRs_n_500} with $n=$ 10,000, this seems to be a ``small sample'' phenomenon. 

\begin{figure}[h!] 
  \begin{center}
        \includegraphics[width=54mm]{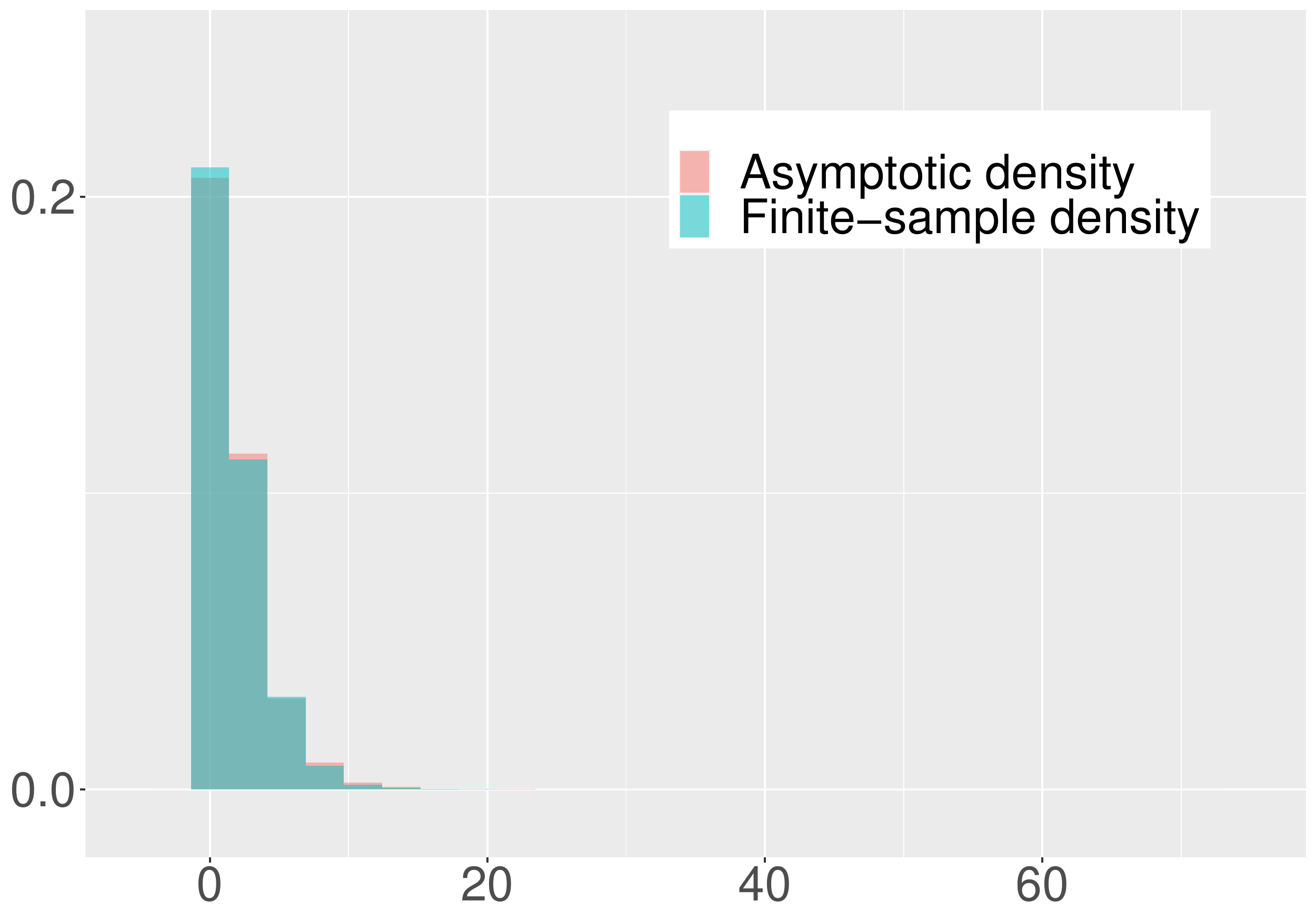}
    \includegraphics[width=54mm]{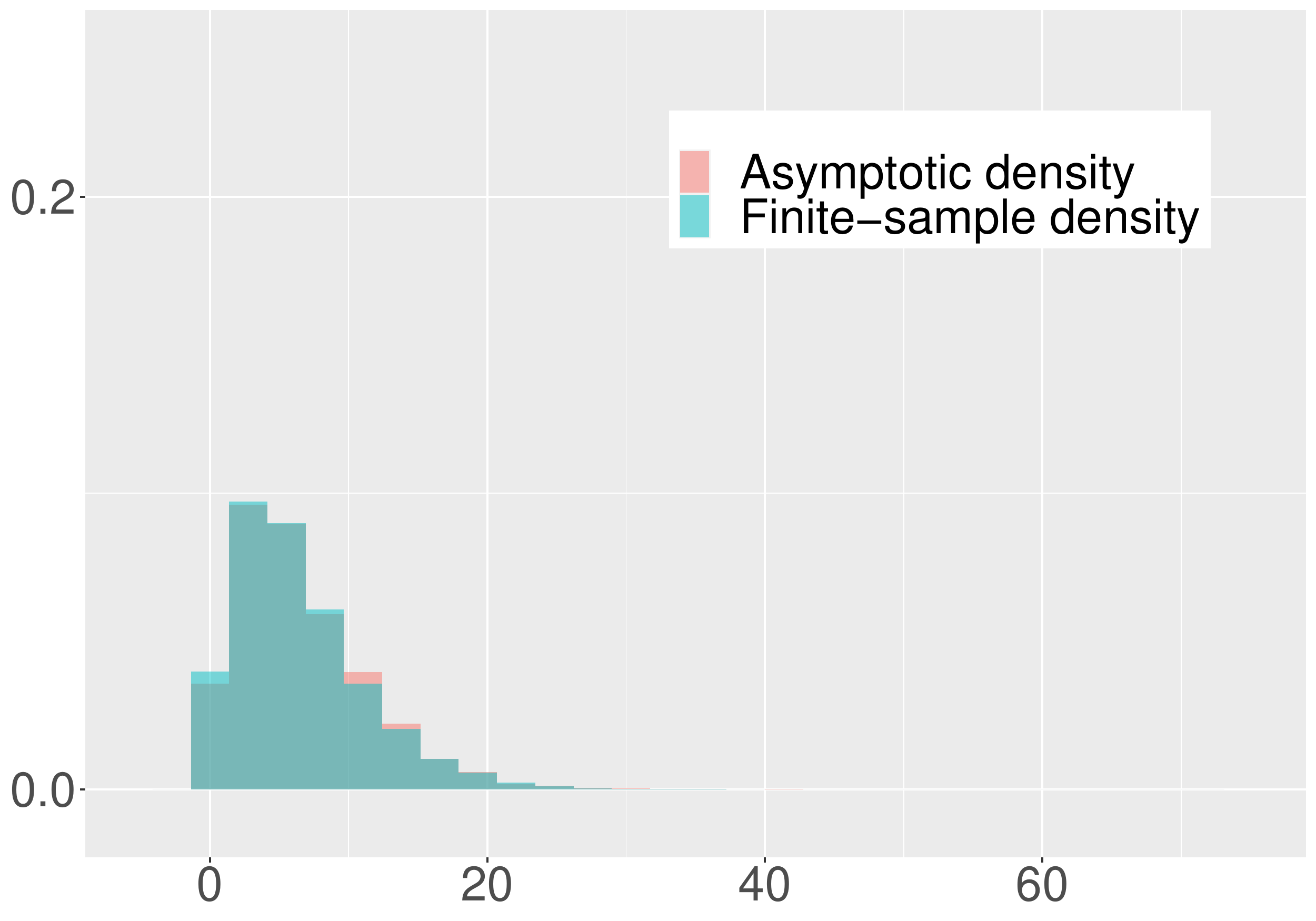}
    \includegraphics[width=54mm]{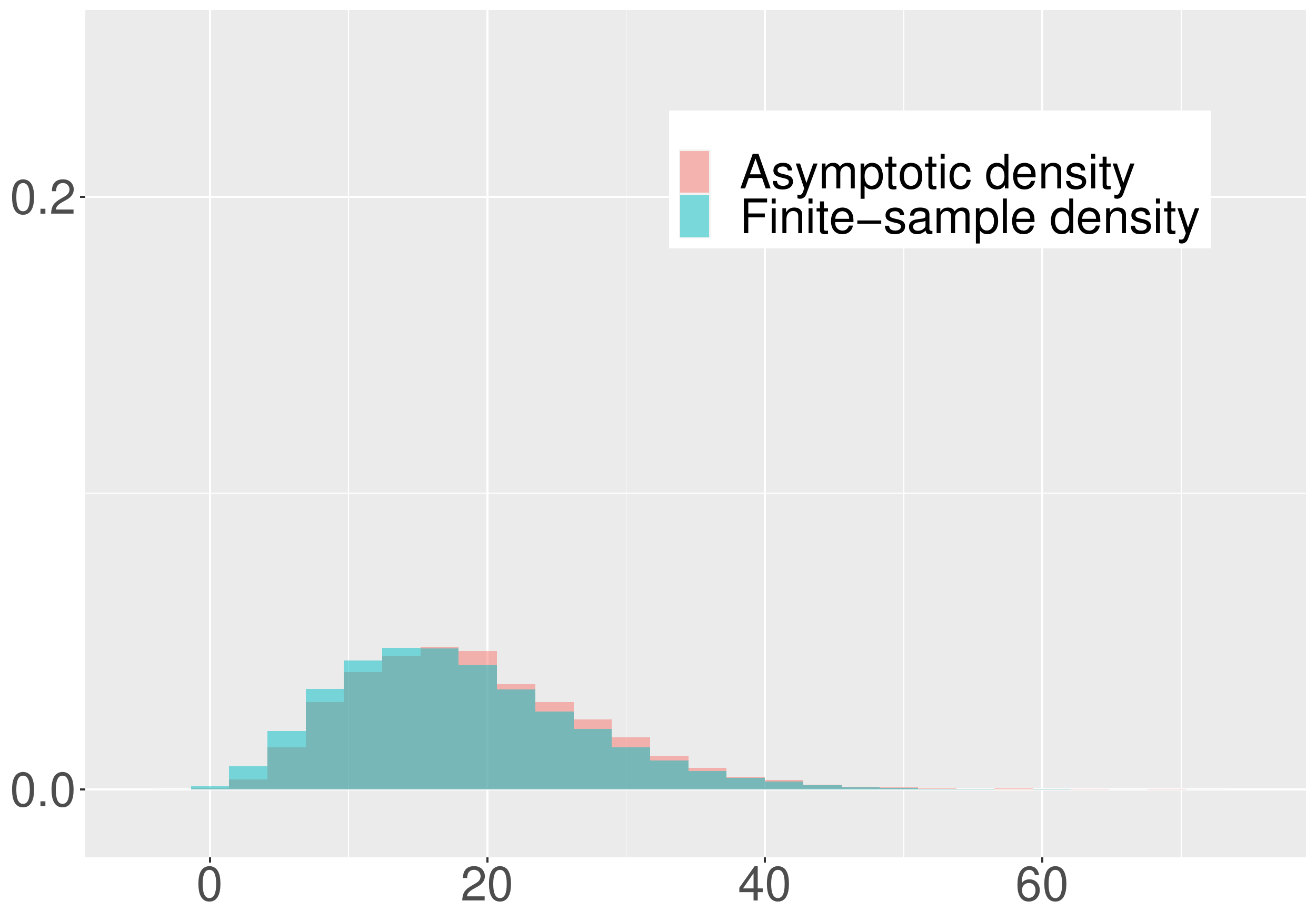}
     \caption{Asymptotic and finite-sample ($n =$ 10,000) densities of $LR_n^\dagger$ with $\sqrt{n} \beta_1 = b_1 = 0, 2,$ and 4 from left to right and $\beta_2 = b_2 = 0$. Here, $\pi = 0.2$, $\varphi = 0.5$, and $\kappa = 0$.}
       \label{plot_LRs_n_10000}
    \end{center}
\end{figure}

\subsubsection{Results for $\| b \| = \infty$ (and $\beta_n \to 0$)} \label{results_b_infty}

Under $\{\gamma_n\} \in \Gamma(\gamma_0,\infty,b,\omega_0,p)$ with $\beta_0 = 0$, we have that
\begin{equation} \label{Lemma32b}
	\| \beta_n \|^{-2} (Q_n^c(\pi) - Q_{0,n}) \to \eta(\pi;\gamma_0,\omega_0) 
\end{equation}
uniformly over $\pi \in \Pi$, where
\[
	\eta(\pi;\gamma_0,\omega_0) = -\frac{1}{2} \omega_0'K(\pi;\gamma_0)' H^{-1}(\pi;\gamma_0) K(\pi;\gamma_0) \omega_0.
\]
Furthermore, we have that $\eta(\pi;\gamma_0,\omega_0)$ is uniquely minimized at $\pi = \pi_0 \ \forall \gamma_0 \in \Gamma$ with $\beta_0 = 0$. It can then be shown that $\hat{\pi}_n - \pi_n \to_p 0$ under $\{\gamma_n\} \in \Gamma(\gamma_0,\infty,b,\omega_0,p)$ and $\| \beta_n \|^{-1} (\hat{\psi}_n - \psi_n) = o_p(1)$ under $\{\gamma_n\} \in \Gamma(\gamma_0,\infty,b,\omega_0,p)$ with $\beta_0 = 0$, where the latter result ensures that \eqref{quadratic_expansion_full_vector} holds; see the discussion below Assumption D1 in AC.
 
\subsection{Asymptotic distribution for $\| b \| = \infty$} \label{ad_b_infty}

Under $\{\gamma_n\} \in \Gamma(\gamma_0,\infty,b,\omega_0,p)$, we have 
\begin{equation} \label{quadratic_expansion_full_vector}
	Q_n(\theta) = Q_n(\theta_n) + DQ_n(\theta_n)'(\theta-\theta_n) + \frac{1}{2} (\theta-\theta_n)'D^2Q_n(\theta_n)(\theta - \theta_n) + R^*_n(\theta),
\end{equation}
where the remainder $R^*_n(\theta)$ satisfies
\[
	\sup_{\theta \in \Theta_n(\delta_n)} |nR_n^*(\theta)| = o_p(1)
\]
for all $\delta_n \to 0$, where $\Theta_n(\delta_n) = \{ \theta \in \Theta: \| \psi - \psi_n \| \leq \delta_n \| \beta_n \| \text{ and } \| \pi - \pi_n \| \leq \delta_n\}$. Here, $DQ_n(\theta)$ and $D^2Q_n(\theta)$ denote the vector of first-order left/right partial derivatives and the matrix of second-order left/right partial derivaties of $Q^\infty_n(\theta)$ with respect to $\theta$, respectively. Define 
\[
	B(\beta) = \left[ \begin{array}{cc} I_3 & 0_{3\times1} \\ 0_{1\times3} & \| \beta \| \end{array} \right],
\]
\[
	J_n = B^{-1}(\beta_n) D^2 Q_n(\theta_n) B^{-1}(\beta_n),
\]
and
\[
	Z_n^\infty = - \sqrt{n} J_n^{-1} B^{-1}(\beta_n) DQ_n(\theta_n).
\]
Then \eqref{quadratic_expansion_full_vector} can be rewritten as
\[
	Q_n(\theta) = Q_n(\theta_n) - \frac{1}{2n} {Z_n^\infty}'J_nZ_n^\infty + \frac{1}{2n} q_n^{\infty}(\sqrt{n}B(\beta_n)(\theta-\theta_n)) + R^*_n(\theta),
\]
where
\[
	q_n^{\infty}(\lambda) = (\lambda - Z_n^\infty)'J_n(\lambda - Z_n^\infty).
\]
Note that
 \[
 	B^{-1}(\beta_n)DQ_n(\theta_n) = \frac{1}{n} \sum_{t=1}^n \frac{1}{2h^\infty_t(\theta_n)} \left( 1-  \frac{y_t^2}{h^\infty_t(\theta_n)} \right)  \tau(\beta_n/\|\beta_n\|,\pi_n),
 \]
 where
 \begin{align*}
 	 \tau(\beta/\|\beta\|,\pi) &= B^{-1}(\beta) \frac{\partial h^\infty_t(\theta)}{\partial \theta} \\  & = \left( \sum_{i=0}^{\infty} \pi^i y_{t-i-1}^2, \sum_{i=0}^{\infty} \pi^i x_{t-i-1}^2, 1, \sum_{i=1}^{\infty} i \pi^{i-1} \left(\frac{\beta_{1}}{\|\beta\|} y_{t-i-1}^2 + \frac{\beta_{2}}{\|\beta\|} x_{t-i-1}^2\right) \right)'.
 \end{align*}
 Then, under $\{\gamma_n\} \in \Gamma(\gamma_0,\infty,b,\omega_0,p)$, we have $\sqrt{n} B^{-1}(\beta_n)DQ_n(\theta_n) \overset{d}{\to} N(0,V(\gamma_0,\omega_0))$, where
 \[
 	V(\gamma_0,\omega_0) = \frac{c_0}{2} E_{\gamma_0} \frac{\tau(\omega_0,\pi_0)}{h^\infty_t(\theta_0)}  \frac{\tau'(\omega_0,\pi_0)}{h^\infty_t(\theta_0)},
 \]	
 and $J_n \to_p J(\gamma_0,\omega_0)$, where $J(\gamma_0,\omega_0) = V(\gamma_0,\omega_0)/c_0$, such that
 \begin{equation} \label{Z_infty}
 	Z_n^\infty \overset{d}{\to} Z^\infty(\gamma_0,\omega_0) \sim N(0,c_0J^{-1}(\gamma_0,\omega_0)).
 \end{equation}
 Let 
 \[
 	q^{\infty}(\lambda;\gamma_0,\omega_0) = (\lambda - Z^\infty(\gamma_0,\omega_0))'J(\gamma_0,\omega_0)(\lambda - Z^\infty(\gamma_0,\omega_0)).
 \]
 Then, under $\{\gamma_n\} \in \Gamma(\gamma_0,\infty,b,\omega_0,p)$, we have
 \begin{equation} \label{lambda_infty}
 	\sqrt{n}B(\beta_n)(\hat{\theta}_n - \theta_n) \overset{d}{\to} \hat{\lambda}^\infty(\gamma_0,\omega_0,b,p),
 \end{equation}
 where $\hat{\lambda}^\infty(\gamma_0,\omega_0,b,p) = \argmin_{\lambda \in \Lambda^\infty} q^{\infty}(\lambda;\gamma_0,\omega_0)$ and where
 \[
 	\Lambda^\infty = [-b_1,\infty] \times  [-b_2,\infty] \times [-\infty,\infty] \times [-p,\infty].
 \]
 Furthermore, under $\{\gamma_n\} \in \Gamma(\gamma_0,\infty,b,\omega_0,p)$, we have
 \begin{equation} \label{obj_fun_infty}
	2n(Q_n(\hat{\theta}_n) - Q_{n}(\theta_n)) \overset{d}{\to} - \hat{\lambda}^\infty(\gamma_0,\omega_0,b,p)'J(\gamma_0,\omega_0)\hat{\lambda}^\infty(\gamma_0,\omega_0,b,p).
\end{equation}
Let $S_{\beta,\pi} = [I_2 \ 0_{2\times2}; 0_3' \ 1]$ be the selection matrix that selects (among $\theta$) the entries pertaining to $\beta$ and $\pi$ and define $\hat{\lambda}^\infty_{\beta,\pi}(\gamma_0,\omega_0,b,p) = S_{\beta,\pi} \hat{\lambda}^\infty(\gamma_0,\omega_0,b,p)$. Also, let 
\[
	\lambda^{\infty,r}(\gamma_0,\omega_0,b,p) = \argmin_{\lambda \in \Lambda^{\infty,r}} q^{\infty}(\lambda;\gamma_0,\omega_0),
\]
where 
 \[
 	\Lambda^{\infty,r} = [-b_1,\infty] \times  [-b_2,-b_2] \times [-\infty,\infty] \times [-p,\infty],
 \]
 and define $\hat{\lambda}^{\infty,r}_{\beta,\pi}(\gamma_0,\omega_0,b,p) = S_{\beta,\pi} \hat{\lambda}^{\infty,r}(\gamma_0,\omega_0,b,p)$. Then, the asymptotic distribution of $LR_n$, under $\{\gamma_n\} \in \Gamma(\gamma_0,\infty,b,\omega_0,p)$, is given by
\begin{align}
	LR^\infty(\gamma_0,\omega_0,b,p) =& \hat{\lambda}^\infty_{\beta,\pi}(\gamma_0,\omega_0,b,p)'(cS_{\beta,\pi} J^{-1}(\gamma_0,\omega_0) S_{\beta,\pi}')^{-1}\hat{\lambda}^\infty_{\beta,\pi}(\gamma_0,\omega_0,b,p) \nonumber \\ -& \hat{\lambda}^{\infty,r}_{\beta,\pi}(\gamma_0,\omega_0,b,p)'(cS_{\beta,\pi} J^{-1}(\gamma_0,\omega_0) S_{\beta,\pi}')^{-1}\hat{\lambda}^{\infty,r}_{\beta,\pi}(\gamma_0,\omega_0,b,p). \label{LR_infty}
\end{align}

In the following section, we use the foregoing asymptotic distribution results to analyze the asymptotic size of the two testing procedures proposed by PR.
 
 \section{Asymptotic size and a new test} \label{AsySz}
 
First, we provide a characterization result for asymptotic size that holds for any test for testing \eqref{testing_problem}. To that end, let
\[
	H = \{ h = (b_1, \gamma_0): 0 \leq b_1 < \infty \text{ and }   \gamma_0 \in \text{cl}(\Gamma) \text{ with } \beta_0 = 0 \}
\]
and
\[
	 H^\infty =   \{ h =  (p , \gamma_0) : 0 \leq p < \infty \text{ and } \gamma_0 \in \text{cl}(\Gamma) \text{ with } \pi_0 = 0 \}.
\]

Furthermore, let $RP_\mathcal{T}(h)$ denote the asymptotic rejection probability of the test (that uses $\mathcal{T}_n$) under $\{\gamma_n\} \in \Gamma(\gamma_0,0,b)$ with $b_1 < \infty$ and $b_2 = 0$, where $h = (b_1,\gamma_0)$. Similarly, let $RP_\mathcal{T}^\infty(h)$ denote the asymptotic rejection probability of the test under $\{\gamma_n\} \in \Gamma(\gamma_0,\infty,b,\omega_0,p)$ with $\pi_0 = 0$, $b_1 = \infty$, $b_2 = 0$, and $\omega_0 = e_1$, where $h = (p,\gamma_0)$. Lastly, let $RP_\mathcal{T}^{\infty,\infty} \in [0,1]$ be such that $\limsup_{n \to \infty} RP_{\mathcal{T},n}(\gamma_n) \geq RP_\mathcal{T}^{\infty,\infty}$ for any $\{\gamma_n\} \in \Gamma(\gamma_0,\infty,b,\omega_0,\infty)$ with $\pi_0 = 0$, $b_1 = \infty$, $b_2 = 0$, and $\omega_0 = e_1$ and $RP_{\mathcal{T},n}(\gamma_n) \to RP_\mathcal{T}^{\infty,\infty}$ for some $\{\gamma_n\} \in \Gamma(\gamma_0,\infty,b,\omega_0,\infty)$ with $\pi_0 = 0$, $b_1 = \infty$, $b_2 = 0$, and $\omega_0 = e_1$, where $RP_{\mathcal{T},n}(\gamma_n)$ denotes the finite-sample rejection probability of the test under $\gamma_n$. We can now state our characterization result for AsySz$_\mathcal{T}$.

\begin{pro} \label{pro1} Given $\Gamma$ as defined in Appendix \ref{ACverification}, we have 
\[
	\textup{AsySz}_\mathcal{T} = \max \{\sup_{h \in H} RP_\mathcal{T}(h), \sup_{h \in H^\infty}  RP_\mathcal{T}^\infty(h) , RP_\mathcal{T}^{\infty,\infty} \}
\]
for any test for testing \eqref{testing_problem}.
\end{pro}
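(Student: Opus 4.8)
The plan is to follow the by-now-standard route for characterizing asymptotic size through rejection probabilities along drifting sequences of true parameters, as in AC and AG, adapted to the boundary and weak-identification structure here. First I would reduce $\textup{AsySz}_\mathcal{T}$ to a supremum of subsequential limits: writing $S_n = \sup_{\gamma \in \Gamma: \beta_2 = 0} P_\gamma(\mathcal{T}_n > \textup{cv}_{n,1-\alpha})$, pick $\gamma_n$ with $P_{\gamma_n}(\mathcal{T}_n > \textup{cv}_{n,1-\alpha}) \geq S_n - 1/n$ along a subsequence on which $S_n \to \textup{AsySz}_\mathcal{T}$. Using compactness of $\textup{cl}(\Gamma)$ together with the compactifications $\mathbb{R}^2_{+,\infty}$ and $[0,\infty]$ of the localization parameters, I would extract a further subsequence along which $\gamma_n \to \gamma_0$, $\sqrt{n}\beta_n \to b$, $\beta_n/\| \beta_n \| \to \omega_0$, and $\sqrt{n}\| \beta_n \|\pi_n \to p$ all converge, so that $\textup{AsySz}_\mathcal{T} = \lim_n P_{\gamma_n}(\mathcal{T}_n > \textup{cv}_{n,1-\alpha})$. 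Because every $\gamma_n$ has $\beta_{2,n} = 0$, the limits automatically satisfy $b_2 = 0$ and, whenever $\| b \| = \infty$, $\omega_0 = e_1$; this is exactly what restricts the relevant localizations to those indexing $H$, $H^\infty$, and $RP_\mathcal{T}^{\infty,\infty}$.

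I would then classify the extracted subsequence and read off its limiting rejection probability from the results of Section \ref{AT}. If $\| b \| < \infty$ then $\{\gamma_n\} \in \Gamma(\gamma_0,0,b)$ with $b_1 < \infty$, $b_2 = 0$, and $\beta_0 = 0$, so $h = (b_1,\gamma_0) \in H$ and the limit equals $RP_\mathcal{T}(h)$. If $\| b \| = \infty$ with $\pi_0 = 0$ and $p < \infty$, then $\{\gamma_n\} \in \Gamma(\gamma_0,\infty,b,e_1,p)$, $h = (p,\gamma_0) \in H^\infty$, and the limit equals $RP_\mathcal{T}^\infty(h)$. The only remaining regime is strong identification, $\| b \| = \infty$ with $p = \infty$ (which is forced whenever $\pi_0 > 0$). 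In each regime the passage of the rejection probability to its limit is the content of the asymptotic distribution results, provided the limit distribution has no atom at $\textup{cv}_{1-\alpha}$ and $\textup{cv}_{n,1-\alpha} \to \textup{cv}_{1-\alpha}$, so that $P(\mathcal{T}_n > \textup{cv}_{n,1-\alpha})$ genuinely converges rather than only yielding one-sided bounds.

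For the lower bound $\textup{AsySz}_\mathcal{T} \geq$ right-hand side, I would reverse the construction. For each $h = (b_1,\gamma_0) \in H$, take a sequence $\{\gamma_n\} \in \Gamma(\gamma_0,0,b)$ with $b_2 = 0$ realizing the localization; then $S_n \geq P_{\gamma_n}(\mathcal{T}_n > \textup{cv}_{n,1-\alpha}) \to RP_\mathcal{T}(h)$, so $\textup{AsySz}_\mathcal{T} \geq RP_\mathcal{T}(h)$, and taking the supremum over $h \in H$ yields the first term. The analogous construction over $H^\infty$ yields the second term. For the third, I would invoke the defining property of $RP_\mathcal{T}^{\infty,\infty}$: some $p = \infty$ sequence satisfies $RP_{\mathcal{T},n}(\gamma_n) \to RP_\mathcal{T}^{\infty,\infty}$, whence $\textup{AsySz}_\mathcal{T} \geq RP_\mathcal{T}^{\infty,\infty}$. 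Every sequence so constructed has $\beta_{2,n} = 0$ and is thus admissible in the supremum defining $\textup{AsySz}_\mathcal{T}$.

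The hard part will be the upper bound in the strong-identification regime, $\| b \| = \infty$ with $p = \infty$, and in particular the sequences with $\pi_0 > 0$, which are indexed by neither $H$ nor $H^\infty$. Here the defining inequality $\limsup_n RP_{\mathcal{T},n}(\gamma_n) \geq RP_\mathcal{T}^{\infty,\infty}$ runs the wrong way, so the definition cannot simply be quoted. The way I would close the gap is a nesting/continuity argument showing that the weak- and semi-strong-identification limit laws converge to the strong-identification law as the localization diverges, so that along any such subsequence $\lim_n RP_{\mathcal{T},n}(\gamma_n)$ equals $\lim_{p \to \infty} RP_\mathcal{T}^\infty(p,\gamma_0)$ when $\pi_0 = 0$ and $\lim_{b_1 \to \infty} RP_\mathcal{T}(b_1,\gamma_0)$ when $\pi_0 > 0$; being a limit of quantities each bounded by $\sup_{h \in H^\infty} RP_\mathcal{T}^\infty(h)$, respectively $\sup_{h \in H} RP_\mathcal{T}(h)$, it is itself bounded by the maximum. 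Applying this same bound to the sequence attaining $RP_\mathcal{T}^{\infty,\infty}$ shows that this third term never exceeds the first two, reconciling its minimal, attained definition with the equality in the statement. Establishing this no-jump property as identification passes from semi-strong to strong — the continuity of the limiting rejection probability across $\pi_0 \geq 0$ and as $b_1,p \to \infty$ — is where the genuine work lies.
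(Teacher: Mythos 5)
The paper never actually writes this proof (the appendix entry is a placeholder), so the only benchmark is the intended route via Lemma 2.1 of AC, and your skeleton is exactly that route: near-maximizing null sequences, subsequence extraction in the compactified localization space $(\gamma_0,b,\omega_0,p)$, classification into the regimes indexing $H$, $H^\infty$, and the $p=\infty$ class, and the reverse constructions for the lower bound. Your diagnosis of where the difficulty sits is also exactly right: the $p=\infty$ regime is the problem, both because null sequences with $\pi_0>0$ (and, one should add, with $\beta_{0,1}>0$) belong to none of the stated classes, and because the stated inequality $\limsup_n RP_{\mathcal{T},n}(\gamma_n)\geq RP^{\infty,\infty}_\mathcal{T}$ is a floor rather than a ceiling, so it cannot cap anything in the upper bound.

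Where your proposal goes wrong is the patch. A ``no-jump'' continuity of limiting rejection probabilities as $b_1\to\infty$ or $p\to\infty$ is not provable for ``any test,'' which is the level of generality the proposition claims: consider a test that rejects when an exact asymptotic $\alpha$-level test rejects \emph{or} when both $\hat\pi_n>1/2$ and $\hat\beta_{n,1}>n^{-1/4}$. Every limit implicit in the definitions of $RP_\mathcal{T}(h)$, $RP^\infty_\mathcal{T}(h)$, and $RP^{\infty,\infty}_\mathcal{T}$ exists and equals $\alpha$ (under every sequence in those classes either $\hat\beta_{n,1}=O_p(n^{-1/2})$ or $\hat\pi_n\to_p 0$), yet the rejection probability tends to one along fixed null parameters with $\beta_{0,1}>0$, $\pi_0>1/2$; your continuity property simply fails here, so it is a substantive hypothesis, not a lemma. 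Moreover, even where continuity holds, your argument has no version for sequences with $\beta_{0,1}>0$ and $\pi_0>0$: there $\beta_0\neq 0$, so the quantities $RP_\mathcal{T}(b_1,\gamma_0)$ with $(b_1,\gamma_0)\in H$ whose $b_1\to\infty$ limit you invoke are undefined. The resolution consistent with AC's Lemma 2.1 (whose strong-identification hypothesis is a one-sided bound --- stated there for coverage, hence a liminf --- holding over \emph{all} such sequences, plus attainment along one of them) is to repair the definition of $RP^{\infty,\infty}_\mathcal{T}$: it should satisfy $\limsup_n RP_{\mathcal{T},n}(\gamma_n)\leq RP^{\infty,\infty}_\mathcal{T}$ for every null sequence with $p=\infty$, with no restriction to $\pi_0=0$ (thereby covering $\pi_0>0$ and $\beta_{0,1}>0$), and equal it, as a limit, along some such sequence; the $\geq$ in the paper is evidently the coverage-form inequality left unflipped for rejection probabilities. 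With that definition your problematic case is closed by fiat, no continuity argument is needed, and the genuine work becomes verifying this condition for $TS$, $S$, and $LR$-$LF$ from the results in Section \ref{ad_b_infty}, under which all $p=\infty$ null sequences yield the same limit law. In short: right skeleton, correctly located gap, but the patch attempts to prove what the argument must assume, and it leaves the $\beta_{0,1}>0$, $\pi_0>0$ sequences uncovered.
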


The proof follows along the lines of the proof of Lemma 2.1 in AC and is given in Appendix \ref{ACverification}, along with the verification of all claims made in this section.

Next, we consider the two testing procedures proposed by PR. We have
\[
	RP_{TS}(h) = P\left( \mathbbm{1}(LR^\dagger(\gamma_0,b) > LR^\dagger_{1-\alpha}(\gamma_0,0)) \times LR(\gamma_0,b) > \text{cv}_{1-\alpha}  \right),
\]
where $LR^\dagger_{1-\alpha}(\gamma_0,0)$ denotes the $1-\alpha$ quantile of $LR^\dagger(\gamma_0,0)$, and
\[
	RP_{S}(h) = P\left( LR(\gamma_0,b) > \text{cv}_{1-\alpha}  \right).
\]
We note that $RP_{TS}(h)$ and $RP_{S}(h)$ depend on $\gamma_0$ through $H(\pi;\gamma_0)$, $\Omega(\pi_1, \pi_2;\gamma_0)$, and $K(\pi;\gamma_0)$.

Since $LR^\dagger_n \to \infty$ and $P(\widetilde{LR}^\dagger_{n,1-\alpha} < \infty) = 1$ under $\{\gamma_n\} \in \Gamma(\gamma_0,\infty,b,\omega_0,p)$ with $\pi_0 = 0$, $b_1 = \infty$, $b_2 = 0$, and $\omega_0 = e_1$, we have that
\[
	RP_{TS}^\infty(h) = RP_{S}^\infty(h) =  P\left( LR^\infty(\gamma_0,p) > \text{cv}_{1-\alpha}  \right),
\]
where $LR^\infty(\gamma_0,p) = LR^\infty(\gamma_0,e_1,(\infty,0)',p)$ with $\pi_0 = 0$. Now, let  
\[
 \rho(\gamma_0) = \frac{\left( J^{-1}(\gamma_0,e_1)\right)_{\beta_2,\pi}}{\sqrt{\left( J^{-1}(\gamma_0,e_1)\right)_{\beta_2,\beta_2}\left( J^{-1}(\gamma_0,e_1)\right)_{\pi,\pi}}}
\]
and $q(\gamma_0) = q(\rho(\gamma_0)) = \sin^{-1}(\rho(\gamma_0))/2\tilde{\pi}$. Furthermore, let ${LR}^\infty_{1-\alpha}(\gamma_0,0)$ denote the $1-\alpha$ quantile of ${LR}^\infty(\gamma_0,p)$ and note that $LR_{1-\alpha}^\infty(\gamma_0,\infty) = \text{cv}_{1-\alpha}$. It can be shown that $LR_{1-\alpha}^\infty(\gamma_0,p)$ is strictly monotonically increasing (decreasing) in $p$ if $ \rho(\gamma_0) < 0$ ($ \rho(\gamma_0) > 0$), while $LR_{1-\alpha}^\infty(\gamma_0,p) = \text{cv}_{1-\alpha}$ if $\rho(\gamma_0) = 0$; this result is in line with and extends the result in PR. It then follows, from Theorem 2.1 in \cite{Kopylev:11}, that 
\begin{align}
	& \sup_{h \in H^\infty} P\left( LR^\infty(\gamma_0,b) > \text{cv}_{1-\alpha}  \right) \nonumber \\
	=& \max \{ \sup_{\gamma_0:(0,\gamma_0) \in H^\infty, \rho(\gamma_0) > 0} 1 - ((1/2 - q(\gamma_0)) + 1/2 F_{\chi^2}(\text{cv}_{1-\alpha};1) + q(\gamma_0) F_{\chi^2}(\text{cv}_{1-\alpha};2)), \alpha \}, \label{cf_expression}
\end{align}
where $F_{\chi^2}(\cdot;k)$ denotes the cumulative distribution function of a $\chi^2$ random variable with degree of freedom $k$. Noting that $ 1 - ((1/2 - q(\gamma_0)) + 1/2 F_{\chi^2}(\text{cv}_{1-\alpha};1) + q(\gamma_0) F_{\chi^2}(\text{cv}_{1-\alpha};2))$ is strictly increasing in $\rho(\gamma_0)$, we have that $\sup_{h \in H^\infty} P\left( LR^\infty(\gamma_0,b) > \text{cv}_{1-\alpha}  \right)$ is bounded from above by $11.46\%$ for $\alpha = 0.05$; this number is obtained by evaluating \eqref{cf_expression} at $\rho(\gamma_0) = 1$. 

In what follows, we determine lower bounds on $\sup_{h \in H} RP_{TS}(h)$ and $\sup_{h \in H} RP_{S}(h)$ by numerically evaluating $RP_{TS}(h)$ and $RP_{S}(h)$ at certain choices of $h \in H$. Similarly, we determine a lower bound on \eqref{cf_expression} by numerically evaluating $\rho(\gamma_0)$ at certain choices of $\gamma_0$ such that $(0,\gamma_0) \in H^\infty$ (and by plugging the resulting value into $1 - ((1/2 - q(\gamma_0)) + 1/2 F_{\chi^2}(\text{cv}_{1-\alpha};1) + q(\gamma_0) F_{\chi^2}(\text{cv}_{1-\alpha};2))$). Combining the above lower bounds then provides us with lower bounds on AsySz$_{TS}$ and AsySz$_{S}$.

The numerical evaluation is based on the following \textit{dgp}, which is inspired by PR (see their Appendix D). $\phi$ is such that $x_t$ follows an AR(1), i.e., 
\begin{equation} \label{phi}
	x_t = \varphi x_{t-1} + \epsilon_t,
\end{equation}
where
\[
	\left( \begin{array}{c} z_t \\ \epsilon_t \end{array} \right) \sim N\left( \left( \begin{array}{c} z_t \\ \epsilon_t \end{array} \right), \left( \begin{array}{cc} 1 & \kappa \\ \kappa & 1 \end{array} \right) \right).
\] 
We note that, given the above $\phi$, the ``additional assumption'' of PR is satisfied if and only if $\kappa = 0$. In what follows, we take $\bar{\pi} = 0.9$ and obtain results for $\alpha = 5\%$; details on the numerical evaluation can be found in Appendix \ref{CD}. We find that $RP_{TS}(h) = 6\%$ where $h$ is given by $b_1 = 2.5$, $\zeta_0 = 1$, $\pi_0 = 0.64$, $\varphi_0 = 0.5$, and $\kappa_0 = 0$. Furthermore, we find that $RP_{S}(h) = 9.48\%$ where $h$ is as before, except that $b_1 = 0$ (and $\pi_0 = \Pi^*$). Lastly, we find that $RP^\infty_{TS}(h) = RP^\infty_{S}(h) = 6.65\%$ where $h$ is given by $p = 0$, $\beta_1 = 0.3$, $\zeta_0 = 1$, $\pi_0 = 0$, $\varphi_0 = 0$, and $\kappa_0 = 0.99$ so that $\rho(\gamma_0) = 0.39$.\footnote{This lower bound is increasing in $\kappa_0$, e.g., we have $RP^\infty_{TS}(h) = RP^\infty_{S}(h) = 7.11\%$ where $h$ is as before except that $\beta_1 = 0.25$ and $\kappa_0 = 0.999$ so that $\rho(\gamma_0) = 0.49$. We refrain from reporting results for larger $\kappa_0$ due to the decreasing accuracy in the numerical evaluation (as $\kappa_0$ approaches 1).} Noting that $RP_{TS}^{\infty,\infty} = RP_{S}^{\infty,\infty} = \alpha$, we obtain the following Corollary to Proposition \ref{pro1}.

\begin{cor} \label{cor1}
Given $\Gamma$ as defined in Appendix \ref{ACverification} with $\overline{\pi} = 0.9$, AsySz$_{TS} \geq 6.65\%$ and AsySz$_{S} \geq 9.48\%$ at $\alpha = 5\%$. 
\end{cor}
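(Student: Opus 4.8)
The plan is to invoke Proposition~\ref{pro1} for each of the two statistics and then bound the resulting three-term maximum from below using the numerical evaluations reported above. The key observation is elementary: a maximum is at least each of its arguments, and a supremum over $H$ (or $H^\infty$) is at least the value of the rejection-probability map at any single admissible $h$. Hence it suffices, for each procedure, to exhibit one configuration of $h$ at which the relevant map attains the target value, and the numerical computations already supplied do exactly this.

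For the two-step procedure, Proposition~\ref{pro1} yields
\[
  \text{AsySz}_{TS} = \max\{ \sup_{h \in H} RP_{TS}(h),\ \sup_{h \in H^\infty} RP_{TS}^\infty(h),\ RP_{TS}^{\infty,\infty}\}.
\]
I would first record $RP_{TS}^{\infty,\infty} = \alpha = 5\%$, which holds because $LR^\infty_{1-\alpha}(\gamma_0,\infty) = \text{cv}_{1-\alpha}$, so that the limiting rejection probability under $p = \infty$ equals $\alpha$. I would then lower-bound the middle term by evaluating $RP_{TS}^\infty$ at the admissible $h = (p,\gamma_0) \in H^\infty$ with $p = 0$, $\beta_1 = 0.3$, $\zeta_0 = 1$, $\pi_0 = 0$, $\varphi_0 = 0$, and $\kappa_0 = 0.99$ (for which $p < \infty$ and $\pi_0 = 0$, so $h \in H^\infty$), giving $RP_{TS}^\infty(h) = 6.65\%$; the first term is bounded below by $6\%$ via the reported evaluation at $b_1 = 2.5$. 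Taking the maximum gives $\text{AsySz}_{TS} \geq \max\{6\%, 6.65\%, 5\%\} = 6.65\%$.

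The argument for the second testing procedure is structurally identical. Proposition~\ref{pro1} gives $\text{AsySz}_S = \max\{ \sup_{h \in H} RP_S(h),\ \sup_{h \in H^\infty} RP_S^\infty(h),\ RP_S^{\infty,\infty}\}$, with $RP_S^{\infty,\infty} = \alpha$ and $RP_S^\infty = RP_{TS}^\infty$. Here the binding term is the first one: evaluating $RP_S$ at the admissible $h = (b_1,\gamma_0) \in H$ with $b_1 = 0$, $\zeta_0 = 1$, $\varphi_0 = 0.5$, $\kappa_0 = 0$, and $\pi_0 = \Pi^*$ (so that $b = 0$ and $\beta_0 = 0$, placing $h$ in $H$) gives $RP_S(h) = 9.48\%$, whence $\text{AsySz}_S \geq \max\{9.48\%, 6.65\%, 5\%\} = 9.48\%$.

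All the conceptual content is carried by Proposition~\ref{pro1}; what remains is purely computational, and so the main obstacle is not analytical but numerical. One must verify that $RP_{TS}$, $RP_S$, and $RP_{TS}^\infty = RP_S^\infty$ actually take the stated values at these configurations. This entails simulating the limiting laws $LR^\dagger(\gamma_0,b)$, $LR(\gamma_0,b)$, and $LR^\infty(\gamma_0,p)$---including the random threshold $LR^\dagger_{1-\alpha}(\gamma_0,0)$ that enters $RP_{TS}$ through the first-step indicator---which in turn requires evaluating $H(\pi;\gamma_0)$, $\Omega(\pi_1,\pi_2;\gamma_0)$, $K(\pi;\gamma_0)$, and $J(\gamma_0,e_1)$ under the AR(1) \textit{dgp} in \eqref{phi}; these computations are relegated to Appendix~\ref{CD}.
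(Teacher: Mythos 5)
Your proposal is correct and follows essentially the same route as the paper: Corollary \ref{cor1} is obtained there by applying Proposition \ref{pro1} to each procedure, noting $RP_{TS}^{\infty,\infty} = RP_{S}^{\infty,\infty} = \alpha$, and lower-bounding the three-term maximum via the numerically evaluated rejection probabilities at exactly the configurations you cite ($RP_{TS}(h)=6\%$ at $b_1=2.5$, $RP_{S}(h)=9.48\%$ at $b_1=0$, and $RP_{TS}^\infty(h)=RP_{S}^\infty(h)=6.65\%$ at $p=0$, $\kappa_0=0.99$), with the computations delegated to Appendix \ref{CD}. No gap beyond the numerical verification, which the paper handles the same way.
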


\begin{rem} \label{}
We note that the two lower bounds obtained in Corollary \ref{cor1} also apply if $\Gamma$ is restricted to satisfy the ``additional assumption'' in PR and, in case of the second testing procedure, the assumption that $\beta_1 > 0$. This is due to the point-wise nature of the assumptions and the ``suprema'' in the definitions of Sz$_\mathcal{T}$ and AsySz$_\mathcal{T}$. For example, the ``additional assumption'' in PR is only imposed at $\pi = 0$ and does not exclude sequences of true parameters that are such that $\rho(\gamma_n) \to \rho(\gamma_0) > 0$ and $\pi_n > 0$ for all $n \geq 1$, while $\pi_n \to \pi_0 = 0$.
\end{rem}

Figure \ref{plot_LRs_n_500} shows that the (asymptotic) distribution of $LR_n$ does not vary a lot with $\beta_1$, at least for the particular $\gamma$ under consideration. This motivates our suggestion to test \eqref{testing_problem} using $LR_n$ combined with a plug-in least favorable configuration critical value (PI-LF); in what follows, the resulting test is abbreviated as $LR$-$LF$. In the context at hand, we have to consider two identification scenarios ($b_1 < \infty$ and $b_1 = \infty$) that result in two different asymptotic null distributions of $LR_n$. The idea is simple: in each scenario, all unknown quantities/parameters of the asymptotic distribution of $LR_n$ that are consistently estimable are replaced by estimators. For the remaining parameters, we determine the least favorable configuration, i.e., we determine under which values of these parameters the $1-\alpha$ quantile of the asymptotic distribution is maximized. The corresponding $1-\alpha$ quantile then serves as critical value for the scenario under consideration. Finally, the maximum of the thus obtained critical values and $\text{cv}_{1-\alpha}$ constitutes our proposed PI-LF. We note that, under $\{\gamma_n\} \in \Gamma(\gamma_0,0,b)$ with $b_1 < \infty$ and $b_2 = 0$, $b_1$ and $\pi_0$ are not consistently estimable, while, under $\{\gamma_n\} \in \Gamma(\gamma_0,\infty,b,\omega_0,p)$ with $\pi_0 = 0$, $b_1 = \infty$, $b_2 = 0$, and $\omega_0 = e_1$, $p$ is not consistently estimable. Our final implementation follows the above (``general'') approach with two twists. First, under $\{\gamma_n\} \in \Gamma(\gamma_0,0,b)$ with $b_1 < \infty$ and $b_2 = 0$, we restrict the search of the least favorable configuration to values of $b_1$ and $\pi_0$ that satisfy $3(\sqrt{n}b_1)^2+\pi_0^2+2(\sqrt{n}b_1)\pi_0 < 1$; this condition is motivated by the condition $E_{\gamma} y_t^2 < \infty$ \citep[see e.g.,][]{MS:00} that is also imposed by $\Gamma$. Second, if $\hat{\beta}_{n,1} = 0$, then we only consider the critical value obtained from the $b_1 < \infty$ scenario, because the probability of observing $\hat{\beta}_{n,1} = 0$ under $\{\gamma_n\} \in \Gamma(\gamma_0,\infty,b,\omega_0,p)$ with $\pi_0 = 0$, $b_1 = \infty$, $b_2 = 0$, and $\omega_0 = e_1$ approaches 0. See Appendix \ref{CD} for more details on the construction of critical values. By design, we have the following result.

\begin{cor} \label{cor2}
Given $\Gamma$ as defined in Appendix \ref{ACverification}, AsySz$_{LR\text{-}LF} = \alpha$ for $\alpha \in (0,1/2)$. 
\end{cor}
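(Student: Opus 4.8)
The plan is to apply Proposition \ref{pro1}, which expresses $\textup{AsySz}_{LR\text{-}LF}$ as $\max\{\sup_{h\in H} RP_{LR\text{-}LF}(h),\ \sup_{h\in H^\infty} RP^\infty_{LR\text{-}LF}(h),\ RP^{\infty,\infty}_{LR\text{-}LF}\}$, and to show that every term is at most $\alpha$ while the maximum is at least $\alpha$. Write $\widehat{cv}_n$ for the PI-LF critical value and let $\widehat{cv}^{<}_n$ and $\widehat{cv}^{\infty}_n$ denote the two scenario-specific critical values (the suprema, over the non-estimable parameters $(b_1,\pi_0)$ and $p$, of the $(1-\alpha)$-quantiles of the plugged-in versions of \eqref{asy_dist_LR} and \eqref{LR_infty}), so that $\widehat{cv}_n = \max\{\widehat{cv}^{<}_n,\widehat{cv}^{\infty}_n,\text{cv}_{1-\alpha}\}$ when $\hat\beta_{n,1}>0$ and $\widehat{cv}_n = \max\{\widehat{cv}^{<}_n,\text{cv}_{1-\alpha}\}$ when $\hat\beta_{n,1}=0$. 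The backbone of the argument is that, along each sequence class of Section \ref{AT}, the consistently estimable components of $\gamma_0$ (those entering $H(\pi;\gamma_0)$, $\Omega(\pi_1,\pi_2;\gamma_0)$, $K(\pi;\gamma_0)$ and $J(\gamma_0,e_1)$) converge, so that $\widehat{cv}^{<}_n \to_p cv^{<}(\gamma_0)$ and $\widehat{cv}^{\infty}_n \to_p cv^{\infty}(\gamma_0)$ for deterministic limits, while $LR_n$ converges in distribution to \eqref{asy_dist_LR} or \eqref{LR_infty}.

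For the upper bound, take first $h=(b_1,\gamma_0)\in H$, i.e.\ $\{\gamma_n\}\in\Gamma(\gamma_0,0,b)$ with $b_1<\infty$ and $b_2=0$. Since the true $(b_1,\pi_0)$ lies in the restricted search region (a point I would verify from the moment restriction built into $\Gamma$), $cv^{<}(\gamma_0)$ dominates the $(1-\alpha)$-quantile of $LR(\gamma_0,b)$. On both events $\{\hat\beta_{n,1}=0\}$ and $\{\hat\beta_{n,1}>0\}$ the critical value is at least $\widehat{cv}^{<}_n$, so $\widehat{cv}_n \geq \widehat{cv}^{<}_n \to_p cv^{<}(\gamma_0)$, and Slutsky together with continuity of the limit law of $LR_n$ at $cv^{<}(\gamma_0)$ gives $RP_{LR\text{-}LF}(h) = P(LR(\gamma_0,b) > cv^{<}(\gamma_0)) \leq \alpha$. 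The case $h\in H^\infty$ is symmetric: here $\hat\beta_{n,1}>0$ with probability tending to one (so the twist is asymptotically irrelevant), $\widehat{cv}_n \geq \widehat{cv}^{\infty}_n \to_p cv^{\infty}(\gamma_0)$, and $cv^{\infty}(\gamma_0)$ dominates the $(1-\alpha)$-quantile of $LR^\infty(\gamma_0,p)$ at the true $p$, whence $RP^\infty_{LR\text{-}LF}(h)\leq\alpha$. Finally, for $RP^{\infty,\infty}_{LR\text{-}LF}$ ($\pi_0=0$, $p=\infty$) we have $LR_n \Rightarrow LR^\infty(\gamma_0,\infty)$, whose $(1-\alpha)$-quantile equals $\text{cv}_{1-\alpha}$ by the monotonicity result stated before \eqref{cf_expression}, while $\widehat{cv}_n\geq\text{cv}_{1-\alpha}$; hence the asymptotic rejection probability is at most $\alpha$. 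By Proposition \ref{pro1}, $\textup{AsySz}_{LR\text{-}LF}\leq\alpha$.

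For the lower bound it suffices to exhibit one configuration attaining $\alpha$. Let $\gamma_0^*$ attain the global maximum of the scenario-specific LF quantiles over all admissible $\gamma_0$ and non-estimable parameters (note each scenario critical value is at least $\text{cv}_{1-\alpha}$, since the relevant $(1-\alpha)$-quantile tends to $\text{cv}_{1-\alpha}$ as identification strength grows, so this global maximum is at least $\text{cv}_{1-\alpha}$ and the floor never strictly binds at $\gamma_0^*$). If the maximum is achieved in the $b_1<\infty$ scenario, I would take the true sequence to have this $\gamma_0^*$ together with $(b_1,\pi_0)$ equal to the maximizing pair; then $h=(b_1,\gamma_0^*)\in H$, the overall limit $\widehat{cv}_n\to_p\max\{cv^{<}(\gamma_0^*),cv^{\infty}(\gamma_0^*),\text{cv}_{1-\alpha}\}=cv^{<}(\gamma_0^*)$ equals exactly the $(1-\alpha)$-quantile of the true limit law $LR(\gamma_0^*,b)$, and continuity of that law at its quantile (guaranteed for $\alpha\in(0,1/2)$, since then $\text{cv}_{1-\alpha}>0$ and the relevant chi-bar-type laws are continuous above zero) yields $RP_{LR\text{-}LF}(h)=\alpha$. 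If instead the maximum is achieved in the $b_1=\infty$ scenario, the monotonicity of $LR^\infty_{1-\alpha}(\gamma_0,p)$ in $p$ places the maximizing $p$ at $p=0$ (which occurs when $\rho(\gamma_0^*)>0$), and $p=0$ lies in $H^\infty$; the same argument then gives $RP^\infty_{LR\text{-}LF}((0,\gamma_0^*))=\alpha$. Either way $\max\{\sup_H RP_{LR\text{-}LF},\ \sup_{H^\infty} RP^\infty_{LR\text{-}LF}\}\geq\alpha$, so $\textup{AsySz}_{LR\text{-}LF}\geq\alpha$, and combining the two bounds gives the claim.

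The main obstacle is the rigorous control of the PI-LF critical value. Three points require care: (i) establishing $\widehat{cv}^{<}_n\to_p cv^{<}(\gamma_0)$ and $\widehat{cv}^{\infty}_n\to_p cv^{\infty}(\gamma_0)$, which hinges on the continuity of the map sending the estimable components of $\gamma_0$ to the $(1-\alpha)$-quantiles of the boundary-constrained limit laws \eqref{asy_dist_LR} and \eqref{LR_infty} and on the continuity of the outer supremum over the non-estimable parameters; (ii) transferring convergence in distribution of $LR_n$ to convergence of rejection probabilities, which requires the limit laws to have no atom at the (positive) critical value---this is exactly where $\alpha\in(0,1/2)$ enters; and (iii) verifying that the true identification-strength parameters always belong to the LF search sets, in particular that the imposed restriction $3(\sqrt{n}b_1)^2+\pi_0^2+2(\sqrt{n}b_1)\pi_0<1$ is compatible with membership in $\Gamma$ and that the maximizing non-estimable parameter in the lower-bound step lands in the valid range (finite $p$ in the $H^\infty$ case). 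I would settle (i) via a dominated-convergence/stochastic-equicontinuity argument of the type used in the verification of Section \ref{AT}, and (ii)--(iii) from the explicit structure of the limit laws recorded above.
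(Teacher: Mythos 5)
Your decomposition via Proposition \ref{pro1} and your upper-bound argument are the right fleshing-out of what the paper leaves implicit (the paper offers no proof beyond ``by design''): on each class of sequences the PI-LF critical value weakly exceeds the scenario-specific plug-in critical value that is consistent for a deterministic limit dominating the $(1-\alpha)$-quantile of the true limit law, and Slutsky plus continuity then give rejection probability at most $\alpha$. The genuine gap is in the lower bound, i.e., in the equality AsySz$_{LR\text{-}LF} = \alpha$. Your argument rests on the ``backbone'' premise that \emph{both} scenario-specific critical values have deterministic probability limits, $\widehat{cv}^{<}_n \to_p cv^{<}(\gamma_0)$ and $\widehat{cv}^{\infty}_n \to_p cv^{\infty}(\gamma_0)$, along every sequence class. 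That premise fails for whichever scenario is not the true one. Under a weak-identification sequence ($b_1 < \infty$), $\widetilde{LR}^\infty_{n,1-\alpha}$ is built from $\tilde{J}$, hence from $\hat{\theta}_{n,0}$ and in particular $\hat{\pi}_{n,0}$, which are \emph{inconsistent} under weak identification: $\hat{\pi}_{n,0}$ converges in distribution to a nondegenerate random limit (cf.\ \eqref{asy_dist_estimator} and Figure \ref{plot_pi_n_500}), so $\tilde{\rho}$ and therefore $\widehat{cv}^{\infty}_n$ converge to \emph{random} limits, generally dependent on the limit of $LR_n$. Symmetrically, under a strong-identification sequence, $\widehat{cv}^{<}_n$ is built from $\hat{\zeta}^\dagger_{n,0}$ and $\hat{c}^\dagger_n$, estimators computed under the then-misspecified restriction $\beta_1 = \beta_2 = 0$, so its limit is a pseudo-true quantity rather than the $cv^{<}(\gamma_0^*)$ entering your ``global maximum.'' Consequently your key display, $\widehat{cv}_n \to_p \max\{cv^{<}(\gamma_0^*), cv^{\infty}(\gamma_0^*), \text{cv}_{1-\alpha}\} = cv^{<}(\gamma_0^*)$, is unjustified in either branch: at the putative LF configuration the critical value is the maximum of the binding quantile and a random (or pseudo-true) term, and exact attainment of $\alpha$ requires showing that this extra term is dominated by the binding quantile almost surely, at least on the rejection event. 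Nothing in the definition of the global maximizer delivers that domination.

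A second, related slip: you write $\widehat{cv}_n = \max\{\widehat{cv}^{<}_n, \text{cv}_{1-\alpha}\}$ on $\{\hat{\beta}_{n,1} = 0\}$, but in the paper's construction the floor $\text{cv}_{1-\alpha}$ is also multiplied by $\mathbbm{1}(\hat{\beta}_{n,1} > 0)$, so on that event the critical value is $\widehat{cv}^{<}_n$ alone. This is not cosmetic: discarding every strong-identification term on $\{\hat{\beta}_{n,1} = 0\}$ is exactly the design feature aimed at keeping the weak-identification LF configuration ``clean,'' and any repair of your lower bound has to exploit it (and still control the random strong-identification term on the complementary event $\{\hat{\beta}_{n,1} > 0\}$, which has positive limiting probability even when $b_1 = 0$). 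As it stands, your proposal establishes AsySz$_{LR\text{-}LF} \leq \alpha$ in outline, together with the technical caveats you correctly flag, but it does not establish the equality asserted in the corollary.
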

 
 \section{Monte Carlo} \label{MC}
 
In this section, we use simulations to assess (i) how well the finite-sample null rejection frequencies of the different tests, or testing procedures, that we consider are approximated by the foregoing asymptotic theory and (ii) how these tests compare in terms of (finite-sample) power. We generate data from the model given in \eqref{y}, \eqref{h}, and \eqref{phi}. We use a burn-in phase of 100 observations and set the starting values for the $y_t$ and $x_t$ series equal to zero. The sample size $n$ (after discarding the burn-in observations) is equal to 500. The number of simulations is 1,000. Table \ref{RF_table} shows the finite-sample rejection frequencies (in \%) of the likelihood ratio test for testing $H_0^\dagger$ ($LR^\dagger$), the two-step procedure ($TS$), the second testing procedure ($S$), and the test that uses $LR_n$ together with PI-LF ($LR$-$LF$) at the 5\% nominal level for different true parameter constellations. Throughout, $\zeta$ is set equal to 1, while $\pi$, $\kappa$, $\varphi$, $\beta_1$, and $\beta_2$ are varied as indicated in the table. 

\begin{table}[h!]											
\begin{center}									
\caption{Finite-sample rejection frequencies (in \%) at 5\% nominal level}			
\label{RF_table}									
\begin{tabular}{c|rrr|rrr|rrr|r|r}								
\hline								
\hline	
 & (1) & (2) & (3) & (4) & (5) & (6) & (7) & (8) & (9) & (10) & (11)  \\ 
\hline	
$\pi$ 	& \multicolumn{3}{c|}{0.2} & \multicolumn{3}{c|}{0.2} &\multicolumn{3}{c|}{0.2} & 0.64 & 0  \\ 
$\varphi$	& \multicolumn{3}{c|}{0.5} & \multicolumn{3}{c|}{0.5} &\multicolumn{3}{c|}{0.5}  & 0.5 & 0  \\ 
$\kappa$	& \multicolumn{3}{c|}{0} & \multicolumn{3}{c|}{0} &\multicolumn{3}{c|}{0}  & 0 & 0.99  \\ 
$\beta_1$ & \multicolumn{3}{c|}{0} & \multicolumn{3}{c|}{0.05} &\multicolumn{3}{c|}{0.1} & 0.11 & 0.3 \\ 
$\beta_2$ & 0 & 0.05 & 0.1 & 0 & 0.05 & 0.1 & 0 &0.05 & 0.1 & 0 & 0  \\ 
\hline
$LR^\dagger$ 	&  3.1  &    22.9  &  57.5 &  14.0  &  36.0 & 65.2  &  40.6 & 57.5 & 79.1 & 70.5 & 99.0 \\
$TS$		& 2.0   &    21.9 &   56.6 &   3.4 &   25.8 & 59.7   & 5.6 & 29.9 & 62.9 & 6.2 & 8.2\\
$S$ 		& 11.0   &  45.6 & 77.9  & \ 9.0    &  41.1 & 73.3  & 8.3 & 37.2 & 68.9 & 6.9 & 8.2	\\
$LR$-$LF$ 	& 5.3   & 34.5 &  67.7 &   4.8   &  30.9 & 61.9 & 4.0 & 26.8 & 57.1 & 3.9 & 0.0 \\
\hline								
\end{tabular}
\end{center}						
\end{table}

Columns (1), (4), (7), (10), and (11) of Table \ref{RF_table} report null rejection frequencies for $TS$, $S$, and $LR$-$LF$ ($\beta_2 = 0$). We see that the asymptotic null rejection probabilities that underlie the lower bounds on asymptotic size obtained in Section \ref{AsySz} provide good approximations to the corresponding finite-sample rejection frequencies. In particular, column (1) shows an 11\% rejection frequency of $S$, which is close to the corresponding asymptotic rejection probability of 9.48\%. Similarly, columns (10) and (11) show rejection frequencies of 6.2\% and 8.2\% for $TS$, which are close to the corresponding asymptotic rejection probabilities of 6\% and 6.65\%, respectively; note that $\sqrt{500} \times 0.11 \approx 2.5$ and $\sqrt{500} \times 0.3 \approx 6.7$.\footnote{Here, 6.7 appears to be large enough for the asymptotic theory obtained for $b_1 = \infty$ to provide a good approximation, which is corroborated by a rejection frequency of $LR^\dagger$ close to 100\%.} Furthermore, Table \ref{RF_table} reveals that $TS$ has null rejection frequencies below the nominal level for ``very small'' values of $\beta_1$ (see columns (1) and (4)); this is not surprising given the nature of the two-step procedure and the low rejection frequency of $LR^\dagger$ for such values of $\beta_1$. For the \textit{dgp}s considered in columns (1)--(9) our numerical evaluations show that $b_1 = 0$ is the (unique) least favorable configuration for $LR_n$. As a result, the $LR$-$LF$ offers sizeable power gains over $TS$ for ``very small'' values of $\beta_1$ (see columns (2), (3), (5), and (6)), where the latter, by continuity of the power curve, in some sense ``sacrifices'' power. Not surprisingly, for large(r) values of $\beta_1$ the power ranking is reversed (see columns (8) and (9)).

\bibliography{references}

\begin{thebibliography}{}

\bibitem[\protect\citeauthoryear{Andrews}{Andrews}{1999}]{Andrews:99}
Andrews, D. W.~K. (1999).
\newblock {Estimation When a Parameter Is on a Boundary}.
\newblock {\em Econometrica\/}~{\em 67\/}(6), 1341--1383.

\bibitem[\protect\citeauthoryear{Andrews}{Andrews}{2001}]{Andrews:01}
Andrews, D. W.~K. (2001).
\newblock {Testing When a Parameter Is on the Boundary of the Maintained
  Hypothesis}.
\newblock {\em Econometrica\/}~{\em 69\/}(3), 683--734.

\bibitem[\protect\citeauthoryear{Andrews and Cheng}{Andrews and
  Cheng}{2012}]{AC1}
Andrews, D. W.~K. and X.~Cheng (2012).
\newblock {Estimation and Inference With Weak, Semi-Strong, and Strong
  Identification}.
\newblock {\em Econometrica\/}~{\em 80\/}(5), 2153--2211.

\bibitem[\protect\citeauthoryear{Andrews and Guggenberger}{Andrews and
  Guggenberger}{2010}]{AG1}
Andrews, D. W.~K. and P.~Guggenberger (2010).
\newblock {Asymptotic Size and a Problem With Subsampling and With the m out of
  n Bootstrap}.
\newblock {\em Econometric Theory\/}~{\em 26\/}(2), 426--468.

\bibitem[\protect\citeauthoryear{Cox}{Cox}{2022}]{Cox:22}
Cox, G. (2022).
\newblock {Weak Identification with Bounds in a Class of Minimum Distance
  Models}.
\newblock arXiv:2012.11222.

\bibitem[\protect\citeauthoryear{Hansen}{Hansen}{1996}]{Hansen:96}
Hansen, B.~E. (1996).
\newblock Inference when a nuisance parameter is not identified under the null
  hypothesis.
\newblock {\em Econometrica\/}~{\em 64\/}(2), 413--430.

\bibitem[\protect\citeauthoryear{Ketz}{Ketz}{2018}]{Ketz:JMP}
Ketz, P. (2018).
\newblock {Subvector inference when the true parameter vector may be near or at
  the boundary}.
\newblock {\em {Journal of Econometrics}\/}~{\em 207\/}(2), 285--306.

\bibitem[\protect\citeauthoryear{Kopylev and Sinha}{Kopylev and
  Sinha}{2011}]{Kopylev:11}
Kopylev, L. and B.~Sinha (2011).
\newblock On the asymptotic distribution of likelihood ratio test when
  parameters lie on the boundary.
\newblock {\em Sankhya B\/}~{\em 73}, 20--41.

\bibitem[\protect\citeauthoryear{Leeb and P{\"o}tscher}{Leeb and
  P{\"o}tscher}{2005}]{LP:05}
Leeb, H. and B.~M. P{\"o}tscher (2005).
\newblock {Model selection and inference: Facts and fiction}.
\newblock {\em Econometric Theory\/}~{\em 21\/}(1), 21--59.

\bibitem[\protect\citeauthoryear{Leeb and P{\"o}tscher}{Leeb and
  P{\"o}tscher}{2008}]{LP:08}
Leeb, H. and B.~M. P{\"o}tscher (2008).
\newblock Can one estimate the unconditional distribution of
  post-model-selection estimators?
\newblock {\em Econometric Theory\/}~{\em 24\/}(2), 338--376.

\bibitem[\protect\citeauthoryear{Mikosch and Starica}{Mikosch and
  Starica}{2000}]{MS:00}
Mikosch, T. and C.~Starica (2000).
\newblock {Limit theory for the sample autocorrelations and extremes of a GARCH
  (1, 1) process}.
\newblock {\em The Annals of Statistics\/}~{\em 28\/}(5), 1427--1451.

\bibitem[\protect\citeauthoryear{Pedersen and Rahbek}{Pedersen and
  Rahbek}{2019}]{PedersenRahbek:19}
Pedersen, R.~S. and A.~Rahbek (2019).
\newblock {Testing GARCH-X Type Models}.
\newblock {\em Econometric Theory\/}~{\em 35\/}(5), 1012--1047.

\end{thebibliography}

\appendix

\section{Definition of $\Gamma$, verification of claims in Sections \ref{AT} and \ref{AsySz}, and proof of Proposition \ref{pro1}} \label{ACverification}

First, we define $\Gamma$. [Details to be added.]

The claim in the last paragraph before Section \ref{Results_close_to_zero} follows from Lemma 3.1 in AC by verifying Assumptions A and B3 in AC. [Details to be added.]

Next, we show that equation \eqref{quadratic_expansion} holds, which amounts to verifying Assumptions C1 and C4 in AC. [Details to be added.]

Next, we verify the weak convergence result for $G_n(\cdot)$, which amounts to verifying Assumptions C2 and C3 in AC. [Details to be added.] Note that
\[
	\Omega(\pi_1,\pi_2; \gamma_0) = E_{\gamma_0} l^\infty_{\psi,t}(\psi_0,\pi_1) l^\infty_{\psi,t}(\psi_0,\pi_2)' = \frac{c_0}{2\zeta_0^2} E_{\gamma_0} \frac{\partial h^\infty_t(\psi_0,\pi_1)}{\partial \psi} \frac{\partial h^\infty_t(\psi_0,\pi_2)}{\partial \psi'}.
\]
Then, equation \eqref{omega} is obtained by noting (i) 
\begin{align*}
	\frac{1}{\zeta_0^2} E_{\gamma_0} \sum_{i=0}^\infty \pi_1^i y^2_{t-i-1}  \sum_{j=0}^\infty \pi_2^j y^2_{t-j-1} & = 
	E_{\gamma_0} \sum_{i=0}^\infty \pi_1^i z^2_{t-i-1}  \sum_{j=0}^\infty \pi_2^j z^2_{t-j-1}\\ &= \sum_{i=0}^\infty \pi_1^i \pi_2^i E_{\gamma_0}  z^4_{t-i-1}  +  \sum_{i=0}^\infty \sum_{j\neq i} \pi_1^i \pi_2^j E_{\gamma_0} z^2_{t-i-1}  z^2_{t-j-1} \\ &= \sum_{i=0}^\infty \pi_1^i \pi_2^i E_{\gamma_0}  z^4_{t-i-1}  - \sum_{i=0}^\infty \pi_1^i \pi_2^i +  \sum_{i=0}^\infty \pi_1^i  \sum_{j=0}^\infty \pi_2^j \\ &= \frac{E_{\gamma_0}z_t^4-1}{1-\pi_1\pi_2} + \frac{1}{(1-\pi_1)(1-\pi_2)}\\
	&=  \frac{2c}{1-\pi_1\pi_2} + \frac{1}{(1-\pi_1)(1-\pi_2)},
\end{align*}
where the second to last equality uses $E_{\gamma_0} z^2_{t-i-1}  z^2_{t-j-1} = 1$ for $i \neq j$, (ii)
\[
\frac{1}{\zeta_0} E_{\gamma_0} \sum_{i=0}^\infty \pi_1^i y^2_{t-i-1} = E_{\gamma_0} \sum_{i=0}^\infty \pi_1^i z^2_{t-i-1} =   \sum_{i=0}^\infty \pi_1^i E_{\gamma_0} z^2_{t-i-1} = \frac{1}{1-\pi_1},
\]
and (iii)
\[
	\frac{1}{\zeta_0} E_{\gamma_0} \sum_{i=0}^\infty \pi_1^i y^2_{t-i-1}  \sum_{j=0}^\infty \pi_2^j x^2_{t-j-1} =  E_{\gamma_0} \sum_{i=0}^\infty \pi_1^i z^2_{t-i-1}  \sum_{j=0}^\infty \pi_2^j x^2_{t-j-1}.
\]

To show that $H(\pi;\gamma_0) = \Omega(\pi,\pi; \gamma_0)/c_0$, note that, by definition, we have
\[
	H(\pi;\gamma_0) = E_{\gamma_0} \frac{\partial}{\partial \psi'} \rho_{\psi,t}(\psi_0,\pi),
\]
where
\[
	\frac{\partial}{\partial \psi'} \rho_{\psi,t}(\theta) = \frac{\partial}{\partial \psi'} \left[ \frac{1}{2h^\infty_t(\theta)} \left( 1-  \frac{y_t^2}{h^\infty_t(\theta)} \right)  \frac{\partial h^\infty_t(\theta)}{\partial \psi} \right] =  \frac{\partial h^\infty_t(\theta)}{\partial \psi} \frac{\partial}{\partial \psi'} \left[ \frac{1}{2h^\infty_t(\theta)} \left( 1-  \frac{y_t^2}{h^\infty_t(\theta)} \right) \right],
\]
since $\frac{\partial h^\infty_t(\theta)}{\partial \psi}$ is not a function of $\psi$. Furthermore, we have
\[
	\frac{\partial}{\partial \psi'} \left[ \frac{1}{2h^\infty_t(\theta)} \left( 1-  \frac{y_t^2}{h^\infty_t(\theta)} \right) \right]
	= \left[ - \frac{1}{2(h^\infty_t(\theta))^2} \left( 1-  \frac{y_t^2}{h^\infty_t(\theta)} \right) + \frac{1}{2h^\infty_t(\theta)} \frac{y_t^2}{(h^\infty_t(\theta))^2} \right] \frac{\partial h^\infty_t(\theta)}{\partial \psi'}.
\]
Now, given $\Gamma$, we have
\[
	E_{\gamma_0} - \frac{1}{2\zeta_0^2} \left( 1-  z_t^2 \right) \frac{\partial h^\infty_t(\psi_0,\pi)}{\partial \psi}  \frac{\partial h^\infty_t(\psi_0,\pi)}{\partial \psi'} = 0
\]
and the desired result follows.

Next, we derive $K(\pi;\gamma_0) = K(\psi_0,\pi;\gamma_0)$. The expected value entering $K_n(\theta,\gamma^*)$, see equation (3.6) in AC, is given by 
\begin{align*}
 &E_{\gamma^*} \frac{1}{2h^\infty_t(\theta)} \left( 1-  \frac{y_t^2}{h^\infty_t(\theta)} \right)  \frac{\partial h^\infty_t(\theta)}{\partial \psi} = E_{\gamma^*} \frac{1}{2h^\infty_t(\theta)} \left( 1-  \frac{h^\infty_t(\theta^*)}{h^\infty_t(\theta)} \right)  \frac{\partial h^\infty_t(\theta)}{\partial \psi} \\
	=& E_{\gamma^*} \frac{1}{2h^\infty_t(\theta)} \frac{\partial h^\infty_t(\theta)}{\partial \psi} - E_{\gamma^*} \frac{1}{2h^\infty_t(\theta)}   \frac{h^\infty_t(\theta^*)}{h^\infty_t(\theta)}   \frac{\partial h^\infty_t(\theta)}{\partial \psi} \\
	=& E_{\gamma^*} \frac{1}{2h^\infty_t(\theta)} \frac{\partial h^\infty_t(\theta)}{\partial \psi} - E_{\gamma^*} \frac{1}{2h^\infty_t(\theta)}   \frac{\zeta^* +  \beta^*_1 \sum_{i=0}^{\infty} \pi^{*i} y_{t-i-1}^2 + \beta^*_2 \sum_{i=0}^{\infty} \pi^{*i} x_{t-i-1}^2}{h^\infty_t(\theta)}   \frac{\partial h^\infty_t(\theta)}{\partial \psi}
\end{align*}
such that
\[
	 K(\psi_0,\pi;\gamma_0) = - E_{\gamma_0} \frac{1}{2\zeta_0^2} \left. \frac{\partial h^\infty_t(\theta)}{\partial \psi} \right|_{\theta = (\psi_0,\pi)} \left[  \sum_{j=0}^{\infty} \pi^j_0 y_{t-j-1}^2 ; \sum_{j=0}^{\infty} \pi^j_0 x_{t-j-1}^2 \right], 
\]
cf.\ Assumption C5 in AC. The desired result then follows using similar observations to those used in the derivation of $\Omega(\pi_1,\pi_2; \gamma_0)$.

Equation \eqref{dist_Z} follows from Lemma 9.1 in AC, upon verification of Assumptions C2, C3, and C5 in AC. [Details to be added.]

Equation \eqref{lambda_hat} follows from arguments analogous to those underlying Theorem 1(a) in \cite{Andrews:01}. [Details to be added.] Similarly, equations \eqref{asy_dist_estimator} and \eqref{asy_dist_objective} follow from combining the arguments underlying Theorem 3.1 in AC and Theorem 1(b) and (c) in \cite{Andrews:01}. [Details to be added.] Furthermore, equations \eqref{asy_dist_LR_star} and \eqref{asy_dist_LR} follow from arguments analogous to those underlying Theorem 2(b) in \cite{Andrews:01}. [Details to be added.] 

Equation \eqref{Lemma32b} follows from Lemma 3.2(b) in AC. [Details to be added.]  To gain intuition for this result, note that, under $\{\gamma_n\} \in \Gamma(\gamma_0,\infty,b,\omega_0,p)$ with $\beta_0 = 0$, the equivalent of the above $q(\cdot)$ function is given by
\begin{equation} \label{equivalent_q}
	(\lambda - \{- H^{-1}(\pi;\gamma_0)  K(\pi;\gamma_0) \omega_0\} )' H(\pi;\gamma_0)(\lambda - \{- H^{-1}(\pi;\gamma_0)  K(\pi;\gamma_0) \omega_0\} ),
\end{equation}
which is minimized over $\Lambda$, since
\[
	\frac{1}{\| \beta_n \|} (\Psi - \psi_{0,n}) \to \Lambda.
\]
Then, noting that $- H^{-1}(\pi;\gamma_0)  K(\pi;\gamma_0) \omega_0 \geq 0$ (since all elements of $K(\pi;\gamma_0)$ are nonpositive), we have that the minimum of \eqref{equivalent_q} over $\Lambda$ is equal to zero. 

Next, we show that $\eta(\pi;\gamma_0,\omega_0)$ is uniquely minimized at $\pi = \pi_0 \ \forall \gamma_0 \in \Gamma$ with $\beta_0 = 0$. [Details to be added.]

The two claims in the last sentence of Section \ref{results_b_infty} follow from Lemmas 3.3 and 3.4 in AC, respectively. [Details to be added.]

Next, we show that equation \eqref{quadratic_expansion_full_vector} holds. [Details to be added.]

Equation \eqref{Z_infty} holds given the definition of $\Gamma$. [Details to be added.]

Equations \eqref{lambda_infty} and \eqref{obj_fun_infty} follow from combining the arguments underlying Theorem 3.2 in AC and Theorem 1(b) and (c) in \cite{Andrews:01}. [Details to be added.] Similarly, equation \eqref{LR_infty} follows from arguments analogous to those underlying Theorem 2(b) in \cite{Andrews:01}. [Details to be added.]

\begin{proof}[Proof of Proposition \ref{pro1}]
[Details to be added.]
\end{proof}

Next, we show that $LR^\dagger_n \to \infty$ and $P(\widetilde{LR}^\dagger_{n,1-\alpha} < \infty) = 1$ under $\{\gamma_n\} \in \Gamma(\gamma_0,\infty,b,\omega_0,p)$ with $\pi_0 = 0$. [Details to be added.]

Lastly, we show that $LR_{1-\alpha}^\infty(\gamma_0,p)$ is strictly monotonically increasing (decreasing) in $p$ if $ \rho(\gamma_0) < 0$ ($ \rho(\gamma_0) > 0$). [Details to be added.]

\section{Computational details} \label{CD}

In what follows, we let $\bar{\Pi}$ and $\bar{\Pi}^*$ denote discretized versions of $\Pi$ and $\Pi^*$, respectively. 

\subsection{Evaluation of asymptotic distribution for $\| b \| < \infty$} 

The asymptotic distributions displayed in Figures \ref{plot_pi_n_500}--\ref{plot_beta_zeta_n_500} as well as the asymptotic rejection probabilities $RP_{TS}(h)$ and $RP_{S}(h)$ are evaluated using simulation. The underlying \textit{dgp} is given in \eqref{y}, \eqref{h}, and \eqref{phi}. To simulate from $G(\cdot;\gamma_0)$, we use the method of \cite{Hansen:96}. In particular, we obtain the $j^\text{th}$ draw from $G(\pi;\gamma_0)$ by computing
\[
	\frac{1}{\sqrt{N}}\sum_{k=1}^N Z^j_k \left(\sum_{i=0}^{99} \pi^i z^2_{k,t-i-1},\sum_{i=0}^{99} \pi^i x^2_{k,t-i-1}/\zeta_0,1/\zeta_0\right)'{\sqrt{c_0/2}},
\]
 where $\{ Z^j_k \}_{k=1}^N$ are iid $N(0,1)$, independent across $j$, and where $\{ z_{k,t-i-1}, x_{k,t-i-1} \}_{i=0}^{99}$ are generated according to \eqref{phi} (with $z_{k,t-i-1} = z_{t-i-1}$ and $x_{k,t-i-1} = x_{t-i-1}$) using a burn-in phase of 100 observations. Here and in what follows, $c_0 = 1$.
 
We approximate $H(\pi;\gamma_0)$ and $K(\pi;\gamma_0)$ by

{\scriptsize
\[\frac{1}{2} \left[ \begin{array}{ccc}   \frac{2{c_0}}{1-\pi} + \frac{1}{(1-\pi)^2}  & \frac{1}{\zeta_0}\frac{1}{N^*} \sum_{l=1}^{N^*} \sum_{i=0}^{99} \pi^i z^2_{l,t-i-1}  \sum_{j=0}^{99} \pi^j x^2_{l,t-j-1} &  \frac{1}{\zeta_0} \frac{1}{1-\pi} \\  \frac{1}{\zeta_0}\frac{1}{N^*} \sum_{l=1}^{N^*} \sum_{i=0}^{99} \pi^i z^2_{l,t-i-1}  \sum_{j=0}^{99} \pi^j x^2_{l,t-j-1} &  \frac{1}{\zeta_0^2}\frac{1}{N^*} \sum_{l=1}^{N^*}   ( \sum_{i=0}^{99} \pi^i x^2_{l,t-i-1} )^2 &  \frac{1}{\zeta_0^2}\frac{1}{N^*} \sum_{l=1}^{N^*}    \sum_{i=0}^{99} \pi^i x^2_{l,t-i-1}  \\  \frac{1}{\zeta_0} \frac{1}{1-\pi} &\frac{1}{\zeta_0^2}\frac{1}{N^*} \sum_{l=1}^{N^*}    \sum_{i=0}^{99} \pi^i x^2_{l,t-i-1} & \frac{1}{\zeta_0^2} \end{array} \right]
\]}and
{\scriptsize
\[
	 - \frac{1}{2} \left[ \begin{array}{cc}   \frac{2{c_0}}{1-\pi\pi_0} + \frac{1}{(1-\pi)(1-\pi_0)}  & \frac{1}{\zeta_0}\frac{1}{N^*} \sum_{l=1}^{N^*} \sum_{i=0}^{99} \pi^i z^2_{l,t-i-1}  \sum_{j=0}^{99} \pi_0^j x^2_{l,t-j-1} \\  \frac{1}{\zeta_0}\frac{1}{N^*} \sum_{l=1}^{N^*} \sum_{i=0}^{99} \pi_0^i z^2_{l,t-i-1}  \sum_{j=0}^{99} \pi^j x^2_{l,t-j-1} &  \frac{1}{\zeta_0^2}\frac{1}{N^*} \sum_{l=1}^{N^*} \sum_{i=0}^{99} \pi^i x^2_{l,t-i-1}  \sum_{j=0}^{99} \pi_0^j x^2_{l,t-j-1} \\  \frac{1}{\zeta_0} \frac{1}{1-\pi_0} &\frac{1}{\zeta_0^2}\frac{1}{N^*} \sum_{l=1}^{N^*}    \sum_{i=0}^{99} \pi_0^i x^2_{l,t-i-1} \end{array} \right],
\]}respectively, where $\{ z_{l,t-i-1}, x_{l,t-i-1} \}_{i=0}^{99}$ are generated similarly to $\{ z_{k,t-i-1}, x_{k,t-i-1} \}_{i=0}^{99}$.

Together, this allows us to obtain the $j^\text{th}$ draw from $Z(\pi;\gamma_0,b)$. The corresponding draws of the ``asymptotic versions'' of the relevant estimators and test statistics can then be obtained using the formulae in Section \ref{ad_b_less_infty} with $\bar{\Pi}$ in place of $\Pi$ for a total of $J$ draws. 

We use $J = N =$ 10,000, $N^* =$ 100,000, and $\bar{\Pi} = \{0,0.01,0.02,\dots,\overline{\pi}\}$ with $\overline{\pi} = 0.9$.

\subsection{Evaluation of $\rho(\gamma_0)$}
In order to evaluate $\rho(\gamma_0)$, it suffices to evaluate $J(\gamma_0,e_1)$ with $\beta_{0,2} = \pi_0 = 0$. As before, the underlying \textit{dgp} is given in \eqref{y}, \eqref{h}, and \eqref{phi}. We approximate $J(\gamma_0,e_1)$ with $\beta_{0,2} = \pi_0 = 0$ by
\[
	\frac{1}{2N^*} \sum_{l=1}^{N^*}  \frac{\tau_l(e_1,0)}{h^\infty_{l,t}(\theta_0)}  \frac{\tau'_l(e_1,0)}{h^\infty_{l,t}(\theta_0)},
\]
where $\tau_l(e_1,0) = \left( y_{l,t-1}^2, x_{l,t-1}^2, 1,  y_{l,t-2}^2 \right)'$ and $h_{l,t}^\infty(\theta_0) = \zeta_0 +  \beta_{0,1}  y_{l,t-1}^2$. Here, $\{ y_{l,t-i-1}, x_{l,t-i-1} \}_{i=0}^{1}$ is generated according to \eqref{y}, \eqref{h}, and \eqref{phi} (with $y_{l,t-i-1} = y_{t-i-1}$ and $x_{l,t-i-1} = x_{t-i-1}$) using a burn-in phase of 100 observations.

We use $N^* =$ 100,000.

\subsection{Computation of critical values} 

\subsubsection{$b_1 < \infty$ and critical value for testing $H_0^\dagger$ using $LR_n^\dagger$}

Under $\{\gamma_n\} \in \Gamma(\gamma_0,0,b)$ with $b_1 < \infty$ and $b_2 = 0$, $\theta_0$ is consistently estimated by $\hat{\theta}^\dagger_{n,0}$ (as $\beta_0 = 0_2$). Let $\tilde{G}(\cdot)$ denote $G(\cdot;\gamma_0)$ with unknown quantities replaced by consistent estimators, i.e., by estimators that are consistent under $\{\gamma_n\} \in \Gamma(\gamma_0,0,b)$ with $ b_1 < \infty$ and $b_2 = 0$. 
To simulate from $\tilde{G}(\cdot)$, we use the method of \cite{Hansen:96}. In particular, we obtain the $j^\text{th}$ draw from $\tilde{G}(\pi)$ by computing
\[
	\tilde{G}_j(\pi) = \frac{1}{\sqrt{n}}\sum_{t=1}^n Z^j_t \left(\sum_{i=0}^{t-1} \pi^i y^2_{t-i-1},\sum_{i=0}^{t-1} \pi^i x^2_{t-i-1},1\right)'{\sqrt{\hat{c}^\dagger_n/2}}/\hat{\zeta}^\dagger_{n,0},
\]
 where $\{ Z^j_t \}_{t=1}^n$ are iid $N(0,1)$, independent across $j$. Here, we use the fact that $h_t(\hat{\theta}^\dagger_{n,0}) = \hat{\zeta}^\dagger_{n,0}$. Furthermore, we estimate $H(\pi;\gamma_0)$ and $K(\pi;\gamma_0)$ by

{\scriptsize
\begin{align*}
	 &\tilde{H}(\pi) = \\ &\frac{1}{2} \left[ \begin{array}{ccc}   \frac{2{\hat{c}^\dagger_n}}{1-\pi} + \frac{1}{(1-\pi)^2}  & \frac{1}{(\hat{\zeta}^\dagger_{n,0})^2}\frac{1}{n} \sum_{t=1}^n \sum_{i=0}^{t-1} \pi^i y^2_{t-i-1}  \sum_{j=0}^{t-1} \pi^j x^2_{t-j-1} &  \frac{1}{\hat{\zeta}^\dagger_{n,0}} \frac{1}{1-\pi} \\  \frac{1}{(\hat{\zeta}^\dagger_{n,0})^2}\frac{1}{n} \sum_{t=1}^n \sum_{i=0}^{t-1} \pi^i y^2_{t-i-1}  \sum_{j=0}^{t-1} \pi^j x^2_{t-j-1} &  \frac{1}{(\hat{\zeta}^\dagger_{n,0})^2}\frac{1}{n} \sum_{t=1}^n   ( \sum_{i=0}^{t-1} \pi^i x^2_{t-i-1} )^2 &  \frac{1}{(\hat{\zeta}^\dagger_{n,0})^2}\frac{1}{n} \sum_{t=1}^n    \sum_{i=0}^{t-1} \pi^i x^2_{t-i-1}  \\  \frac{1}{\hat{\zeta}^\dagger_{n,0}} \frac{1}{1-\pi} &\frac{1}{(\hat{\zeta}^\dagger_{n,0})^2}\frac{1}{n} \sum_{t=1}^n    \sum_{i=0}^{t-1} \pi^i x^2_{t-i-1} & \frac{1}{(\hat{\zeta}^\dagger_{n,0})^2} \end{array} \right]
\end{align*}}and
{\scriptsize
\begin{align*}
	& \tilde{K}(\pi,\pi_0) = \\  & - \frac{1}{2} \left[ \begin{array}{cc}   \frac{2{\hat{c}^\dagger_n}}{1-\pi\pi_0} + \frac{1}{(1-\pi)(1-\pi_0)}  & \frac{1}{(\hat{\zeta}^\dagger_{n,0})^2}\frac{1}{n} \sum_{t=1}^n \sum_{i=0}^{t-1} \pi^i y^2_{t-i-1}  \sum_{j=0}^{t-1} \pi_0^j x^2_{t-j-1} \\   \frac{1}{(\hat{\zeta}^\dagger_{n,0})^2}\frac{1}{n} \sum_{t=1}^n \sum_{i=0}^{t-1} \pi_0^i y^2_{t-i-1}  \sum_{j=0}^{t-1} \pi^j x^2_{t-j-1} &  \frac{1}{(\hat{\zeta}^\dagger_{n,0})^2}\frac{1}{n} \sum_{t=1}^n \sum_{i=0}^{t-1} \pi^i x^2_{t-i-1}  \sum_{j=0}^{t-1} \pi_0^j x^2_{t-j-1} \\  \frac{1}{\hat{\zeta}^\dagger_{n,0}} \frac{1}{1-\pi_0} & \frac{1}{(\hat{\zeta}^\dagger_{n,0})^2}\frac{1}{n} \sum_{t=1}^n    \sum_{i=0}^{t-1} \pi_0^i x^2_{t-i-1} \end{array} \right],
\end{align*}}respectively. Then, the $j^\text{th}$ draw from $\tilde{Z}(\pi;\pi_0,(b_1,0)')$, the ``estimated counterpart'' to $Z(\pi;\gamma_0,(b_1,0)')$, is given by
\[
	\tilde{Z}_j(\pi;\pi_0,(b_1,0)') = - \tilde{H}^{-1}(\pi) \{ \tilde{G}_j(\pi) + \tilde{K}(\pi,\pi_0)(b_1,0)' \}.
\]
The $j^\text{th}$ draw of $\widetilde{LR}^\dagger$, the ``estimated counterpart'' to $LR^\dagger(\gamma_0,0_2)$, can then be approximated by evaluating $\tilde{Z}_j(\pi;\pi_0,0_2)$ over $\bar{\Pi}$ using the corresponding formulae in Section \ref{ad_b_less_infty}. We note that $\tilde{Z}_j(\pi;\pi_0,0_2)$ and thus the $j^\text{th}$ draw of $\widetilde{LR}^\dagger$ do not depend on $\pi_0$. Finally, the $1-\alpha$ quantile of the empirical distribution of $J$ draws from $\widetilde{LR}^\dagger$ provides us with an estimate of the $1-\alpha$ quantile of the asymptotic distribution of $LR^\dagger_n$ under $\{\gamma_n\} \in \Gamma(\gamma_0,0,b)$ with $b = 0_2$. We denote this quantile by $\widetilde{LR}^\dagger_{n,1-\alpha}$.

The $j^\text{th}$ draw of $\widetilde{LR}(\pi_0,(b_1,0)')$, the ``estimated counterpart'' to $LR(\gamma_0,(b_1,0)')$, is approximated in a similar fashion. And we take the $1-\alpha$ quantile of the empirical distribution of $J$ draws from $\widetilde{LR}(\pi_0,(b_1,0)')$ as estimator of the $1-\alpha$ quantile of the asymptotic distribution of $LR_n$ under $\{\gamma_n\} \in \Gamma(\gamma_0,0,b)$ with $b_1 < \infty$ and $b_2 = 0$, which we denote by $\widetilde{LR}_{n,1-\alpha}(\pi_0,b_1)$.

\subsubsection{$b_1 = \infty$}
Under $\{\gamma_n\} \in \Gamma(\gamma_0,\infty,b,\omega_0,p)$ with $b_1 = \infty$, $b_2 = 0$, and $\omega_0 = e_1$, $\theta_0$ is consistently estimated by $\hat{\theta}_{n,0}$. Here, we do not impose $\pi_0 = 0$ and estimate $J(\gamma_0,e_1)$ by
\[
	\tilde{J} = \frac{1}{2n} \sum_{t=1}^n \tilde{\tau}_t \tilde{\tau}'_t/ h^2_t(\hat{\theta}_{n,0}),
\]
where
\[
	\tilde{\tau}_t = \left( \sum_{i=0}^{t-1} \hat{\pi}_{n,0}^i y_{t-i-1}^2, \sum_{i=0}^{t-1} \hat{\pi}_{n,0}^i x_{t-i-1}^2, 1, \sum_{i=1}^{t-1} i \hat{\pi}_{n,0}^{i-1} y_{t-i-1}^2   \right)'.
\]
Then, we estimate $\rho(\gamma_0)$ by
\[
	\tilde \rho = \frac{\left( \tilde J^{-1}\right)_{\beta_2,\pi}}{\sqrt{\left( \tilde J^{-1}\right)_{\beta_2,\beta_2}\left( \tilde J^{-1}\right)_{\pi,\pi}}}.
\]
Finally, let $\tilde q = q(\tilde \rho)$ and define
\[
	\widetilde{LR}^\infty_{n,1-\alpha} = \begin{cases} \text{cv}_{1-\alpha} & \text{if } \tilde \rho < 0 \\ \left( \sum_{i=0}^2 \mathbbm{1} (W = i) \times \chi^2_i\right)_{1-\alpha} &  \text{if } \tilde \rho \geq 0 \end{cases},
\]
where $\{ \chi^2_i \}_{i=0}^2$ are independent $\chi^2$ random variables with degree of freedom $i=0,1,$ and 2 that are independent of the random variable $W$ that is equal to 0, 1, and 2 with probabilities $1/2 - \tilde q$, 1/2, and $\tilde q$, respectively. Here, the subscript ${1-\alpha}$ indicates the $1-\alpha$ quantile of the random variable in parentheses. Note that under $\{\gamma_n\} \in \Gamma(\gamma_0,\infty,b,\omega_0,p)$ with $\pi_0 = 0$, $b_1 = \infty$, $b_2 = 0$, and $\omega_0 = e_1$ such that $\rho(\gamma_0) \geq 0$, $\widetilde{LR}^\infty_{n,1-\alpha}$ consistently estimates the $1-\alpha$ quantile of $LR^\infty(\gamma_0,e_1,(\infty,0)',0)$. 

\subsubsection{Plug-in least favorable configuration critical value}
The plug-in least favorable configuration critical value that we propose is given by
\[
	\text{cv}^\text{PI-LF}_{n,1-\alpha} = \max \{ \max_{(\pi_0,b_1) \in {PB}} \widetilde{LR}_{n,1-\alpha}(\pi_0,b_1), \mathbbm{1}(\hat{\beta}_{n,1} > 0) \cdot \widetilde{LR}^\infty_{n,1-\alpha}, \mathbbm{1}(\hat{\beta}_{n,1} > 0) \cdot \text{cv}_{1-\alpha} \},
\]
where
\[
	PB = \{ (\pi_0,b_1) \in \bar{\Pi}^* \times B_1: 3(\sqrt{n}b_1)^2+\pi_0^2+2(\sqrt{n}b_1)\pi_0 < 1 \}
\]
and where $B_1$ is a finite grid. 

In our simulations, we use $J =$ 10,000, $\bar{\Pi} = \{0,0.1,0.2,\dots,0.9\}$, $\bar{\Pi}^* = \{0,0.1,0.2,\dots,0.8\}$, and 
$$B_1 = \{ 0, 0.1, 0.2, 0.3, 0.4, 0.5, 1, 1.5, 2, 3, 4, 5, 6, 8, 10, 12 \}.$$
We chose the grids relatively coarsely to keep the computation cost of the Monte Carlo simulations low. For empirical applications, we recommend the use of finer grids. We also note that there is no need to include values of $b_1$ greater than $\sqrt{n/3} $ in $B_1$.

\newpage

\section{Additional graphs} \label{additional_graphs}
 
\begin{figure}[h!] 
  \begin{center}
    \includegraphics[width=54mm]{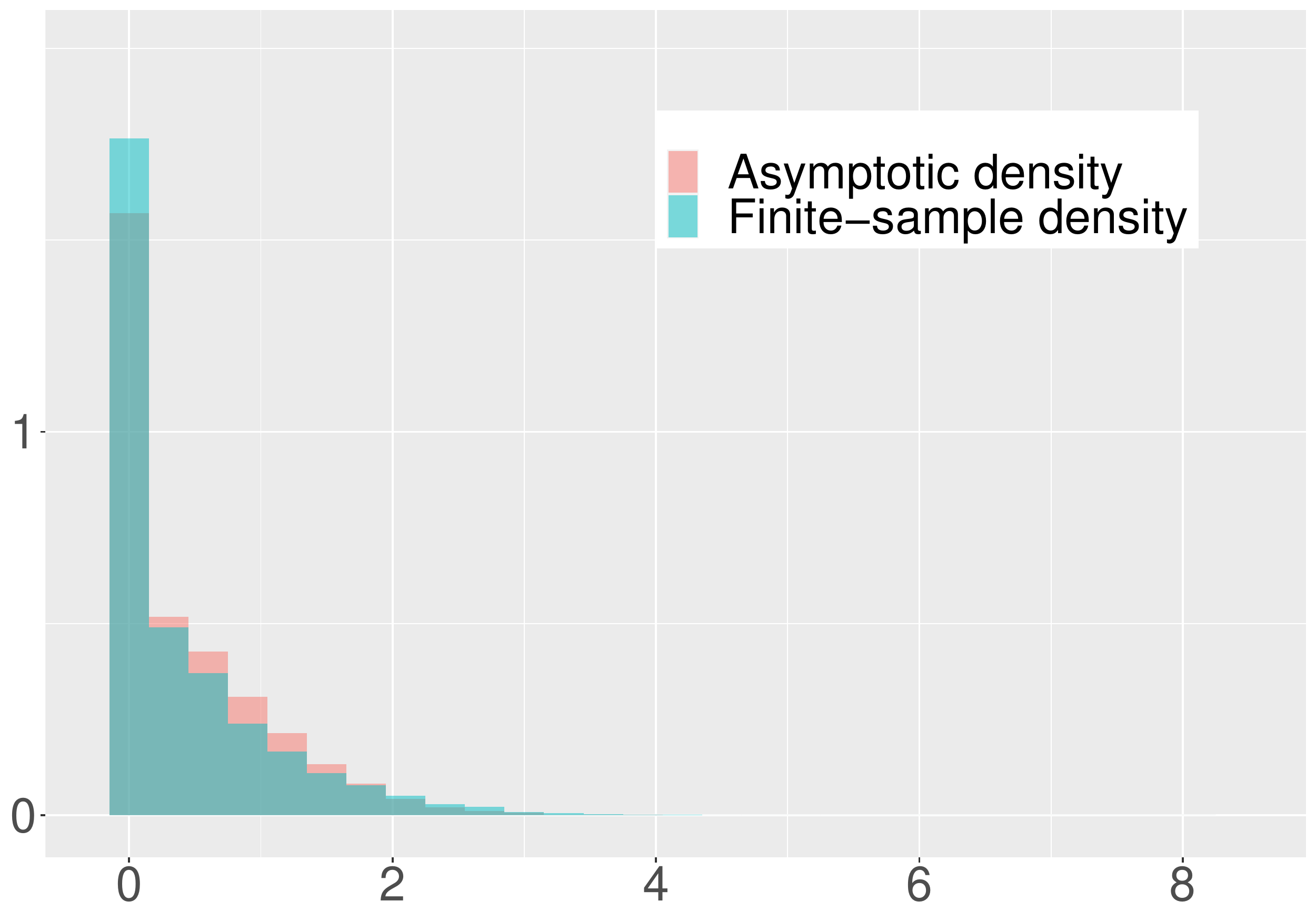}
    \includegraphics[width=54mm]{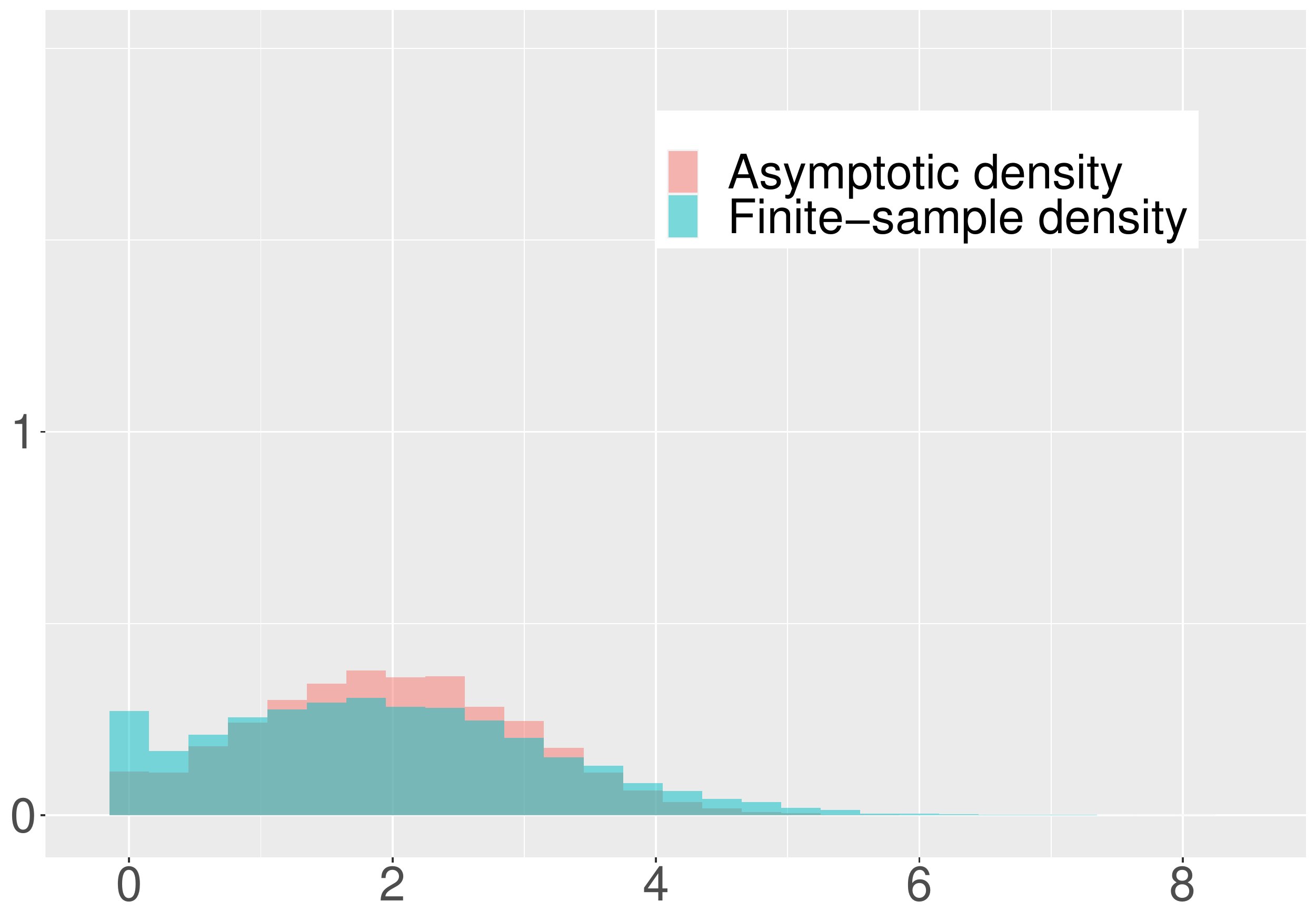}
    \includegraphics[width=54mm]{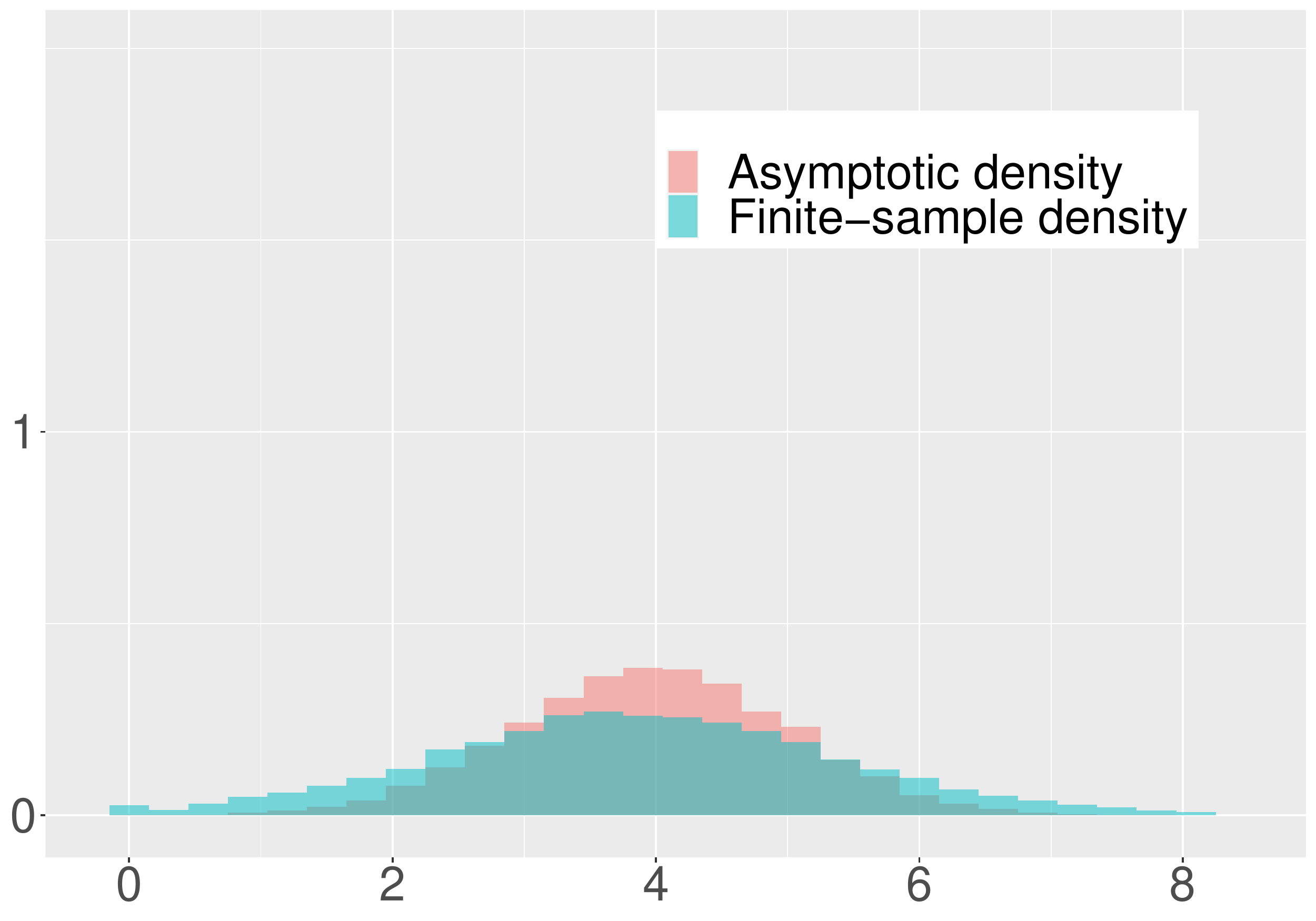}
        \includegraphics[width=54mm]{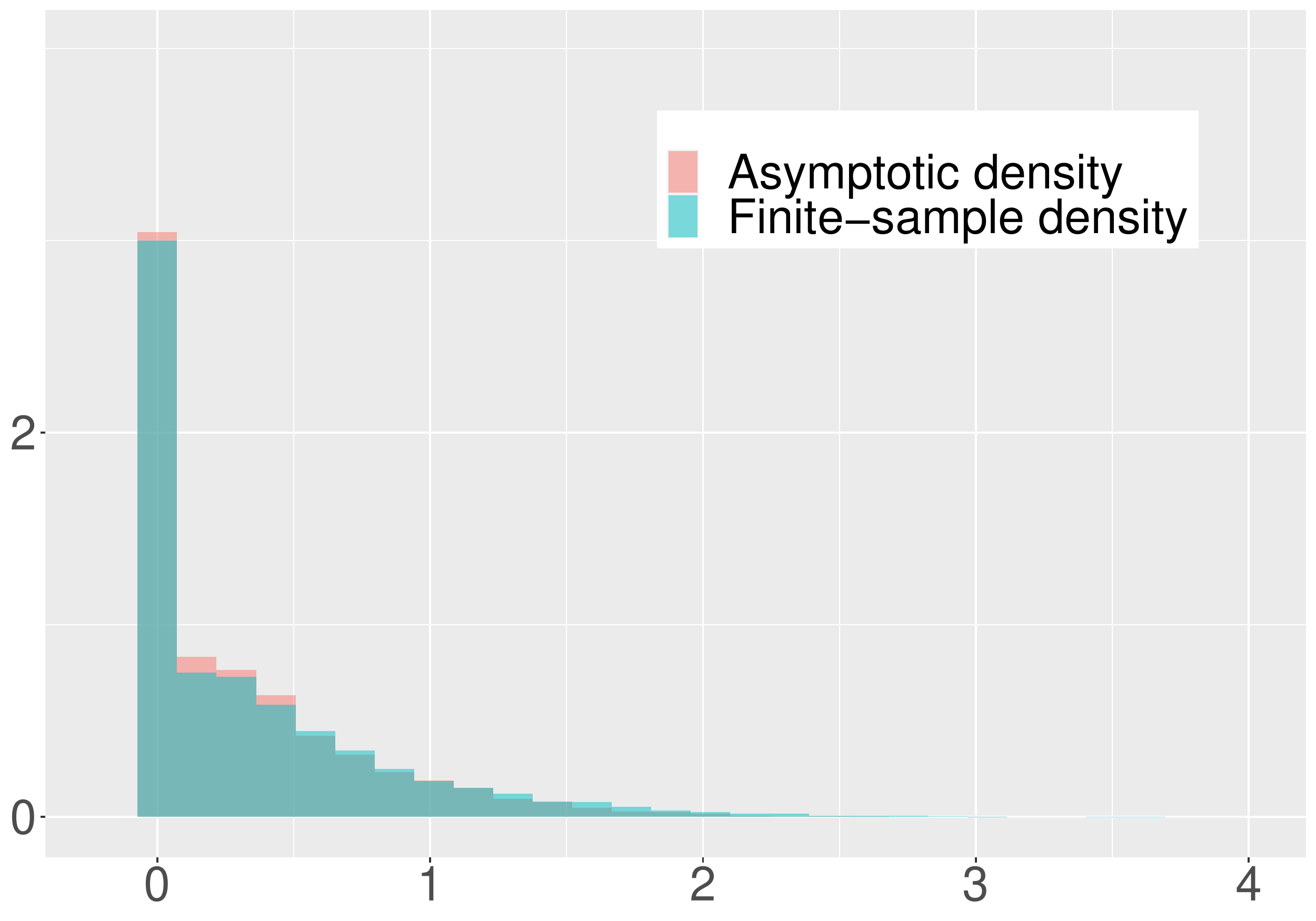}
    \includegraphics[width=54mm]{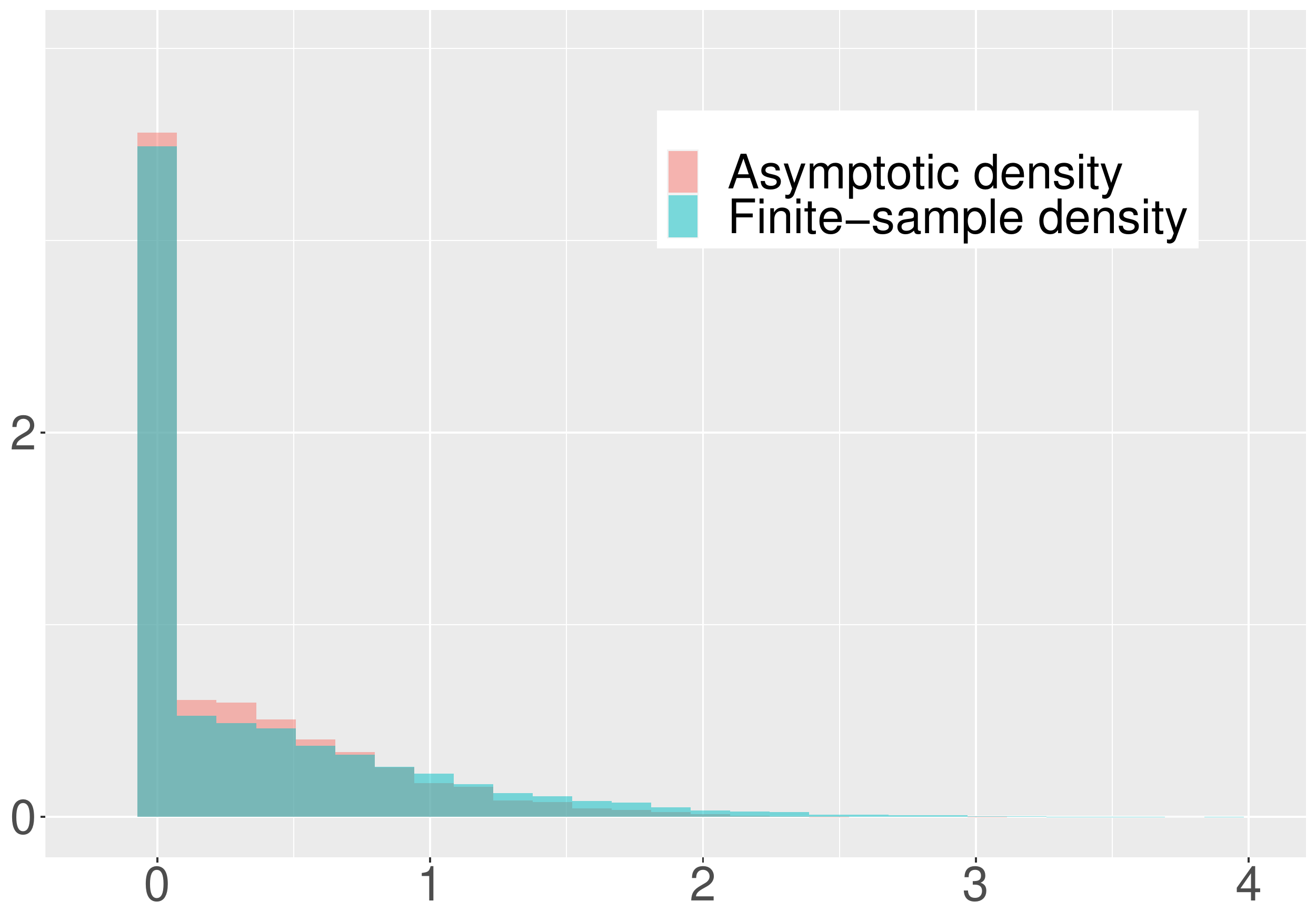}
    \includegraphics[width=54mm]{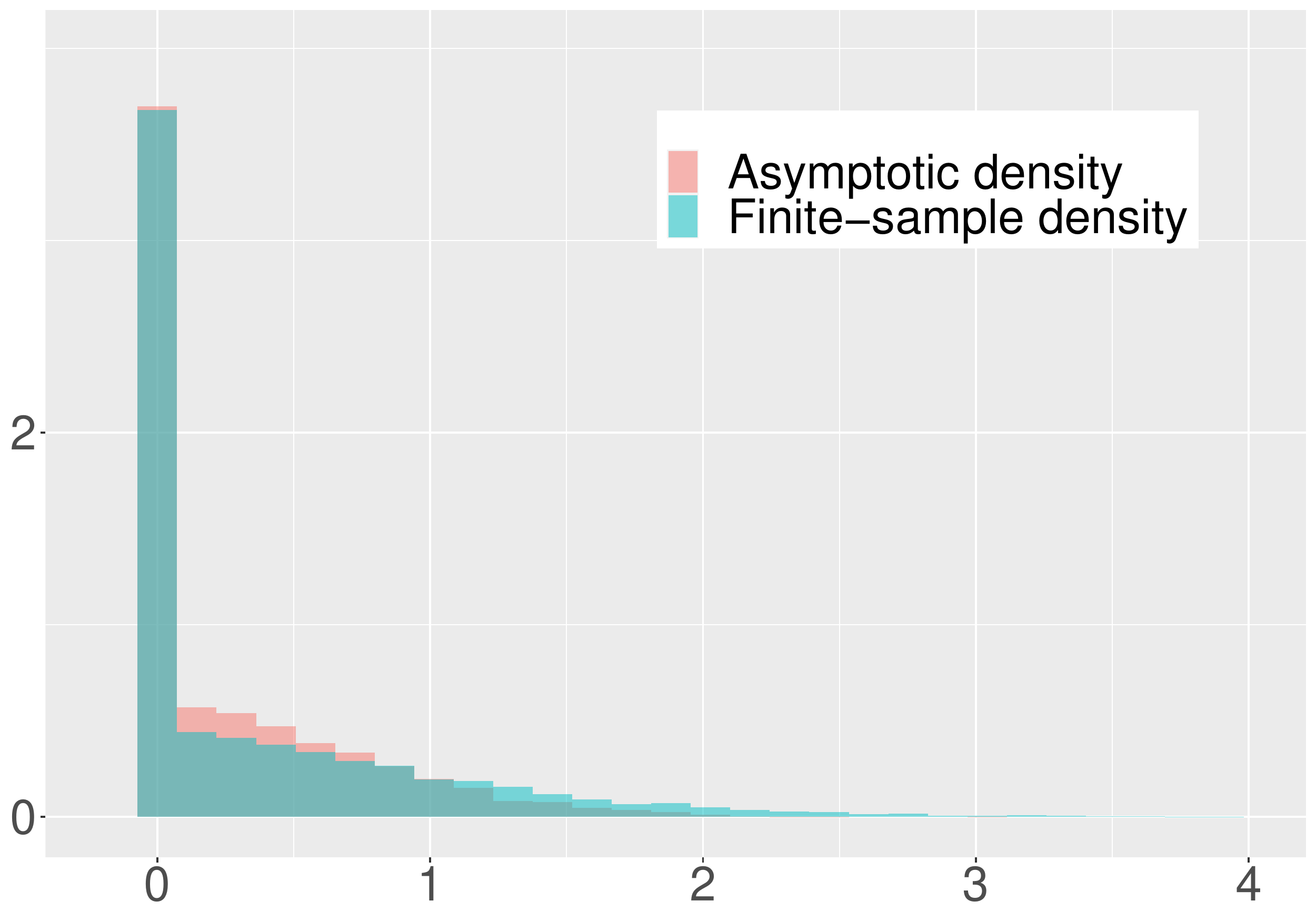}
        \includegraphics[width=54mm]{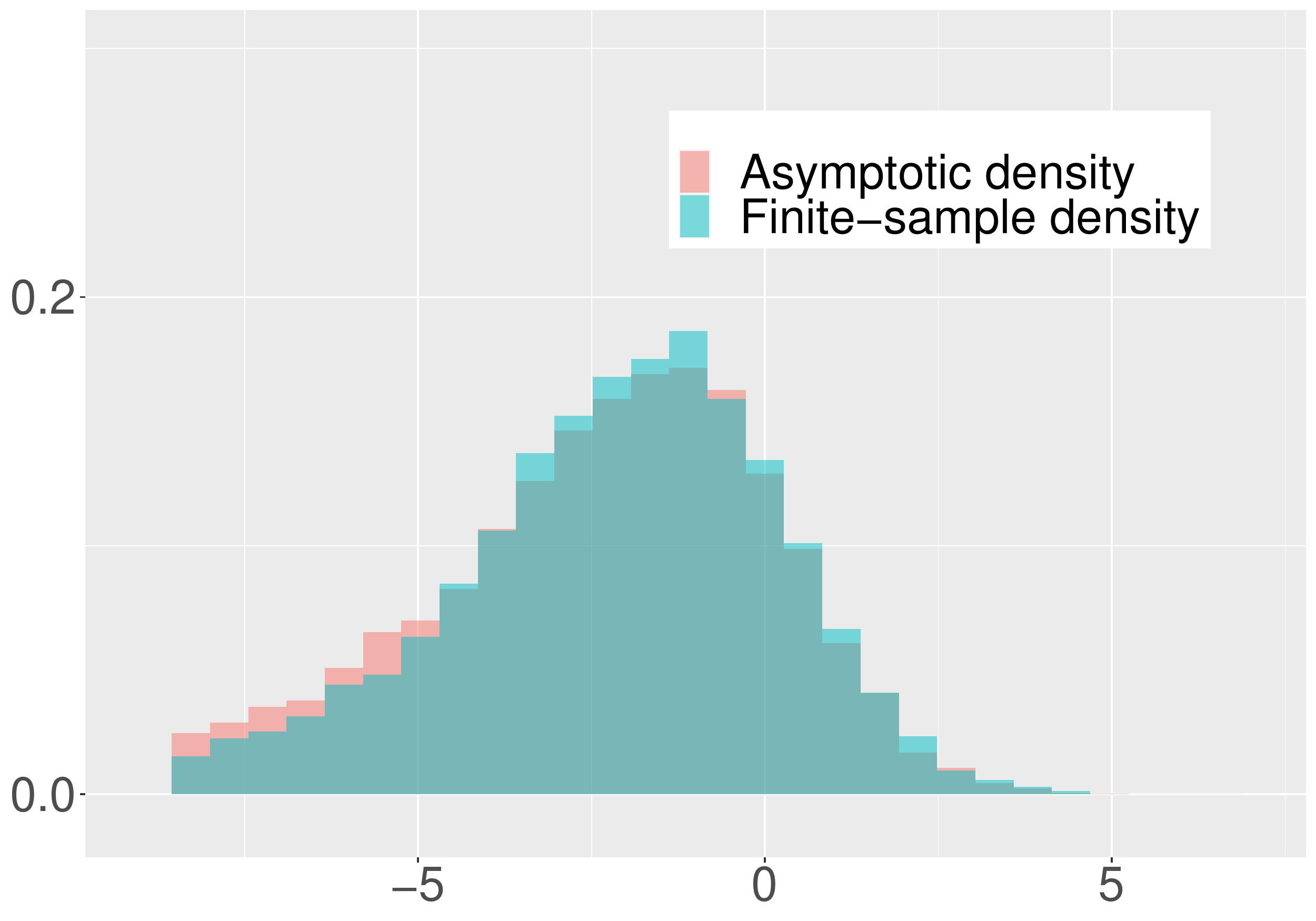}
    \includegraphics[width=54mm]{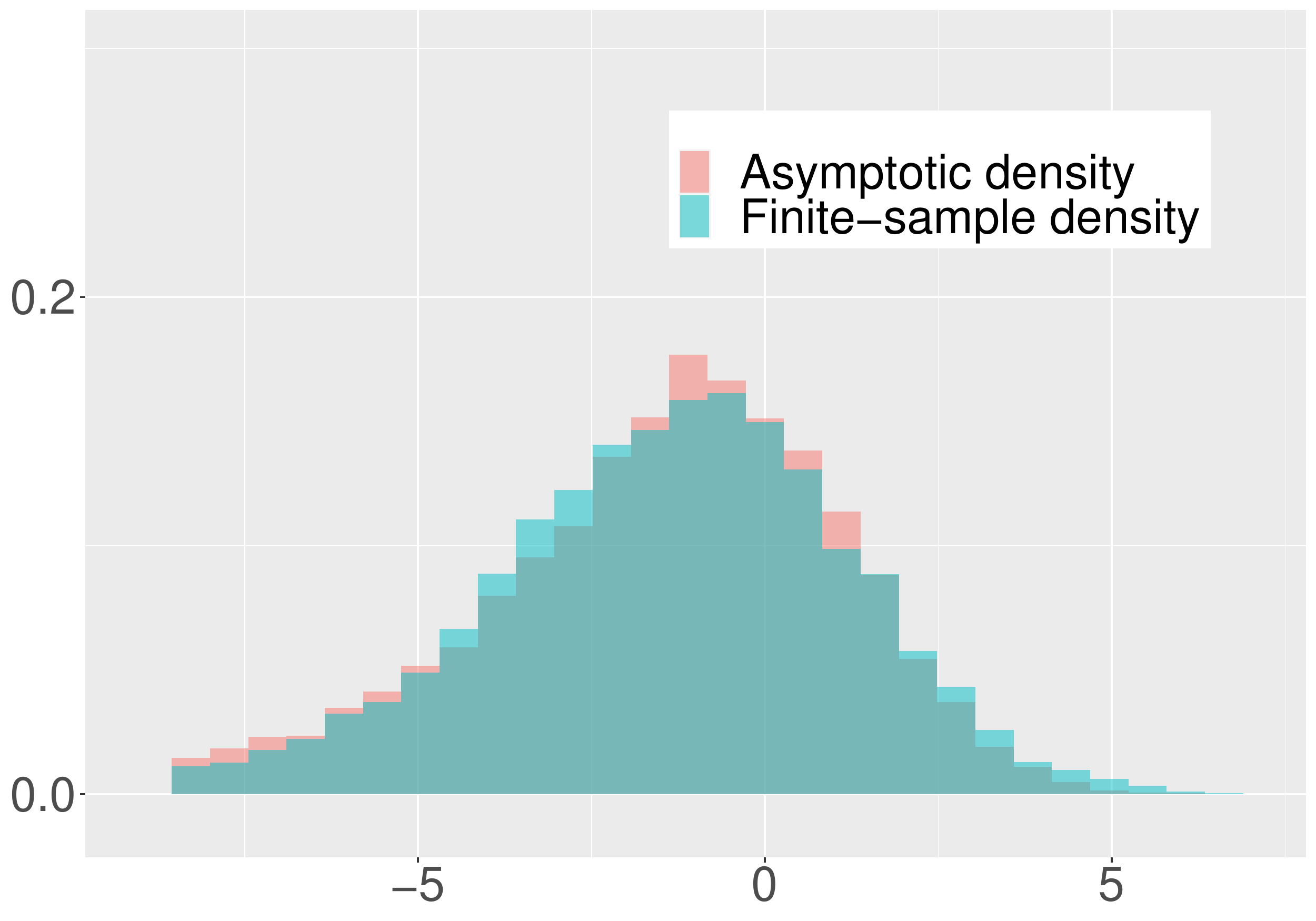}
    \includegraphics[width=54mm]{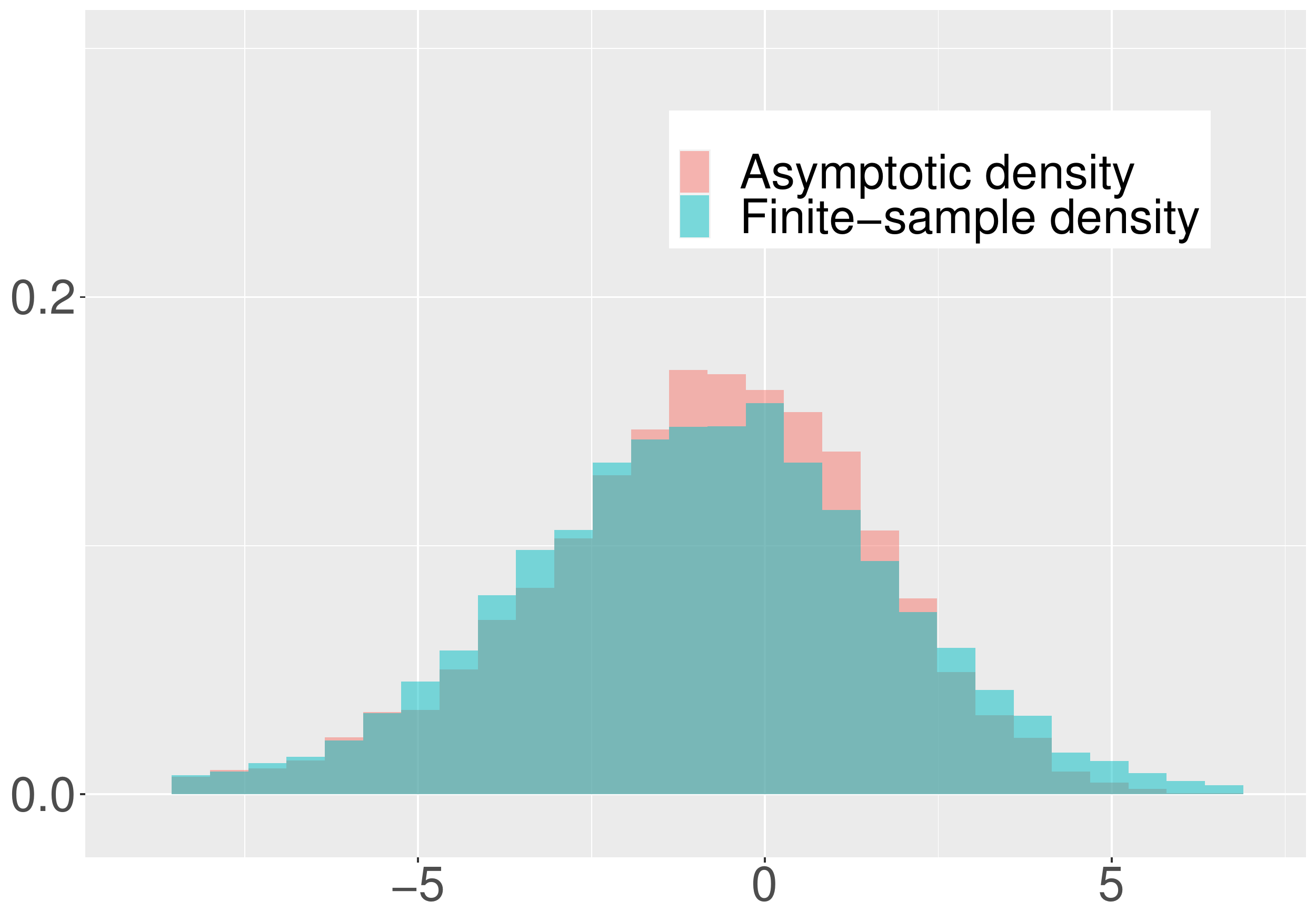}     
     \caption{Asymptotic and finite-sample ($n = 500$) densities of $\sqrt{n}\hat{\beta}_{n,1}$, $\sqrt{n}\hat{\beta}_{n,2}$, and $\sqrt{n}(\hat{\zeta}_n-\zeta_n)$ from top to bottom with $\sqrt{n} \beta_1 = b_1 = 0,2,$ and 4 from left to right and $\beta_2 = b_2 = 0$. Here, $\pi = 0.2$, $\varphi = 0.5$, and $\kappa = 0$.}
        \label{plot_beta_zeta_n_500}
    \end{center}
\end{figure}

\end{document}